\newcommand{\LTC}{\ensuremath{\mathsf{LTC}}\xspace}
\newcommand{\X}{\mathrm{X}}
\newcommand{\A}{\mathbf{A}} 
\newcommand{\E}{\mathbf{E}} 
\newcommand{\M}{\mathfrak{M}}
\newcommand{\ax}{\mathbf{u}} 
\newcommand{\vcut}[1]{}
\newcommand{\Axsys}{\ensuremath{\mathsf{AxSys}}\xspace}
\newcommand{\ifff}{\leftrightarrow}
\begin{document}

\begin{frontmatter}
  \title{~A logic for temporal conditionals \\ ~and a solution to the Sea Battle Puzzle}
  \vspace{-10pt}
  \author{Fengkui Ju}\footnote{Email: \href{mailto:fengkui.ju@bnu.edu.cn}{fengkui.ju@bnu.edu.cn}. Supported by the National Social Science Foundation of
China (No. 12CZX053) and the Major Program of the National Social Science Foundation of China (NO. 17ZDA026).}
  \vspace{-7pt}
  \address{School of Philosophy, Beijing Normal University} 
  \author{Gianluca Grilletti}\footnote{Email: \href{mailto:grilletti.gianluca@gmail.com}{grilletti.gianluca@gmail.com}. Supported by the European Research Council (ERC) under the European Union's Horizon 2020 research and innovation programme (No. 680220).}
  \vspace{-7pt}
  \address{Institute for Logic, Language and Computation, University of Amsterdam}
  \author{Valentin Goranko}\footnote{Email: \href{mailto:valentin.goranko@philosophy.su.se}{valentin.goranko@philosophy.su.se}. Partly supported by research grant 2015-04388 of the Swedish Research Council.}
  \vspace{-7pt}
  \address{Department of Philosophy, Stockholm University} 
  \address{\hspace{-2mm} and Department of Mathematics, University of Johannesburg (visiting professorship)}

\begin{abstract}
Temporal reasoning with conditionals is more complex than both classical temporal reasoning and reasoning with timeless conditionals, and can lead to some rather counter-intuitive conclusions. For instance, Aristotle's famous ``Sea Battle Tomorrow''  puzzle leads to a fatalistic conclusion: whether there will be a sea battle tomorrow or not, but that is necessarily the case now. We propose a branching-time logic $\mathsf{LTC}$ to formalise reasoning  about temporal conditionals and provide that logic with adequate formal semantics. The logic $\mathsf{LTC}$ extends the Nexttime fragment of $\mathsf{CTL}^*$, with operators for model updates, restricting the domain to only future moments where antecedent is still possible to satisfy. We provide formal semantics for these operators that implements the restrictor interpretation of antecedents of temporalized conditionals, by suitably restricting the domain of discourse. As a motivating example, we demonstrate that a naturally formalised in our logic version of the `Sea Battle' argument renders it unsound, thereby providing a solution to the problem with fatalist conclusion that it entails, because its underlying reasoning per cases argument no longer applies when these cases are treated not as material implications but as temporal conditionals. On the technical side, we analyze the semantics of $\mathsf{LTC}$ and provide a series of reductions of $\mathsf{LTC}$-formulae, first recursively eliminating the dynamic update operators and then the path quantifiers in such formulae. Using these reductions we obtain a sound and complete axiomatization for $\mathsf{LTC}$, and reduce its decision problem to that of the modal logic $\mathsf{KD}$.
\end{abstract}

\begin{keyword}
conditionals, temporal settings, restrictors, the Sea Battle Puzzle
\end{keyword}
\end{frontmatter}

\section{Introduction} 
\label{sec:intro}

\paragraph{Temporal conditionals}

The original philosophical motivation for the present work is rooted in the concept of \emph{conditional} (see e.g. \cite{sep-conditionals}, \cite{sep-logic-conditionals}). Classical logic only deals with the simplest kind of conditionals, viz. material implications. However, the logical laws for material implications sometimes lead to counter-intuitive, or just plainly inconsistent conclusions when applied to other types of conditionals. Here is a simple example from \cite{Veltman85}. There are two marbles in a black box and we know that one of them is blue and the other is red. Someone picks a marble from the box but we do not see which one. Intuitively the sentence ``\textit{the picked marble must be blue or the picked marble must be red}'' is false. However, each of the following sentences are true: ``\textit{the picked marble is not blue or is not red. if the picked marble is not blue then it must be red. if the picked marble is not red then it must be blue.}'' These sentences seem to imply that the picked marble must be blue or the picked marble must be red.

Some conditionals involve temporality. Here is an example from chess:
``\textit{if you attack the opponent's queen with your knight in the next move, you will eventually lose the game.}'' We call such conditionals \emph{temporal conditionals}. Various logical problems are concerned with such conditionals and a typical one is the \textit{Sea Battle Puzzle}. This puzzle goes back to Aristotle and is also closely related to the Master Argument of Diodorus Cronus. Briefly, it goes as follows\footnote{Aristotle does not present the puzzle clearly and there are different versions of it in the literature. We use the one from \cite{Garrett17}.}: either there will be a sea battle tomorrow or not; if there is a sea battle tomorrow, it is necessarily so; if there is no sea battle tomorrow, it is necessarily so. Thus, either there will necessarily be a sea battle tomorrow or there will necessarily be no sea battle tomorrow. The conclusion of this argument seems fatalistic and unacceptable for many, but its premises seem mostly fine.

\paragraph{Restrictors} Several works, including \cite{Lewis75}, \cite{vanBenthem84}, \cite{Kratzer86} and \cite{Veltman96} have proposed treatment of the semantics of conditionals based on the concept of \emph{restrictor}, which can be traced back to Ramsey's Test \cite{Ramsey29}. The restrictor-based view interprets conditionals differently from material implication. By this view, the conditional `\textsf{if $\phi$ then $\psi$}' is not a connective relating two sentences. Instead, the if-clause, `\textsf{if $\phi$}', is a device for restricting discourse domains that are collections of possibilities. Thus, `\textsf{if $\phi$ then $\psi$}' is true with respect to a domain iff $\psi$ is true with respect to the domain restricted by `\textsf{if $\phi$}'. 

Reading `\textsf{if $\phi$}' as a restrictor can resolve some logical problems involving conditionals. The two puzzles mentioned above have the same inference pattern, that is, reasoning per cases: ``\textsf{$\phi_1$ or $\phi_2$; if $\phi_1$ then $\psi_1$; if $\phi_2$ then $\psi_2$; therefore, $\psi_1$ or $\psi_2$}''. As pointed out in \cite{Cantwell08}, \cite{KolodnyMacfarlane10} and \cite{Marra18}, reasoning per cases is not generally sound under the restrictor reading of conditionals. Here is why. Assume that `\textsf{$\phi_1$ or $\phi_2$}', 
`\textsf{if $\phi_1$ then $\psi_1$}' and `\textsf{if $\phi_2$ then $\psi_2$}' are true with respect to a discourse domain $\Delta$. Let $\Delta^{\phi_1}$ and $\Delta^{\phi_2}$ be the respective results of restricting
$\Delta$ with `\textsf{if $\phi_1$}' and `\textsf{if $\phi_2$}'. What `\textsf{if $\phi_1$ then $\psi_1$}' and `\textsf{if $\phi_2$ then $\psi_2$}' say is just that $\psi_1$ is
true with respect to $\Delta^{\phi_1}$ and $\psi_2$ is true with respect to $\Delta^{\phi_2}$. But if neither the truth of $\psi_1$ nor the
truth of $\psi_2$ is upward monotonic relative to discourse domains, then
it is possible that neither $\psi_1$ nor $\psi_2$ is true with respect to $\Delta$.

Under the restrictor view of conditionals, the two puzzles are not puzzling any more. The discourse domain of the marble puzzle consists of epistemic possibilities concerning the color of the picked marble. \emph{Must} is not a upward monotonic notion relative to classes of epistemic possibilities. So the argument in this puzzle is not sound. The discourse domain of the Sea Battle Puzzle consists of possible futures and \emph{necessity} is not a upward monotonic notion relative to classes of possible futures. Then the argument in this puzzle is not sound either.

\medskip

\paragraph{Absolute and relative necessity} Temporal restrictors shrink discourse domains which consist of \emph{possible} futures. We argue that there are two senses of possibility: \emph{ontological} and \emph{genuine} possibility.

In reality we make decisions to do something or not to do something. So we will not do whatever we are able to do. Call the class of things we are able to do \emph{the ability domain}. Call the ability domain excluding the things we decide to refrain from doing \emph{the intension domain}. A future is ontologically possible if we can realize it by doing things in the ability domain. A future is genuinely possible if we can realize it by doing things in the intension domain.

Here is an example. There will be an exam tomorrow morning and there are two things for a student to do this evening: preparing for the exam and watching a football match. Doing the former will enable the student to pass the exam but doing the latter will not. The student decides to watch the match. In this case, the future where the student passes the exam is just ontologically possible but not genuinely possible.

Respectively, we can distinguish two senses of necessity: \emph{absolute} and \emph{relative} necessity. A proposition is absolutely necessary if it is the case for every ontologically possible future. A proposition is relatively necessary if it is the case for every genuinely possible future.

The second/third premise of the Sea Battle Puzzle, ``\emph{if there is a/no sea battle tomorrow, it is necessarily so}'', can be called \emph{the principle of necessity of truth}. As there are two senses of necessity, there are two principles of necessity of truth.

We argue that the principle of absolute necessity of truth does not hold. Assume that a fleet  admiral is able to do two things: $a$ and $b$. Doing $a$ will cause a sea battle tomorrow but doing $b$ will not. He decides to do $a$. In this case, it is plausible to say that there will be a sea battle tomorrow. But it is wrong to say that there will be a sea battle tomorrow no
matter what the admiral will do in the absolute sense.

We think, however, that the principle of relative necessity of truth does hold. Let $\Delta$ be the class of all the choices for which the agent is open. Assume that it is not relatively necessary that $\phi$ will be the case. Then there is a $b$ in $\Delta$ such that doing $b$ will make $\phi$ false. In this situation, it is strange to say that $\phi$ will be the case, as the agent might do $b$. So if $\phi$ will be the case, it is relatively necessary that $\phi$ will be the case.

The first premise of the Sea Battle Puzzle, ``\textit{either there will be a sea battle tomorrow or not}'', can be called \emph{the principle of excluded future middle}. We think that this principle is
commonly accepted, at least in classical (non-intuitionistic) reasoning. So all the three premises of the puzzle hold if necessity is understood in the relative sense.

However, the conclusion of this puzzle is problematic for both senses of necessity. Again, suppose that doing $a$ will cause a sea battle tomorrow but doing $b$ will not. Assume that the admiral has not made the decision to do $a$ or $b$ yet. Then it is wrong to say that there will necessarily be a sea battle tomorrow, and also wrong to say that there will necessarily be no sea battle tomorrow. So the conclusion of the puzzle is false.

Thus, the Sea Battle Puzzle is only logically problematic for the case of relative necessity, but not (necessarily) for the case of absolute necessity.

\medskip
 
\paragraph{Our technical proposal and contributions}

In this paper we introduce a new system of extended temporal logic $\mathsf{LTC}$ to formalise temporal conditionals. This logic can express relative necessity. We will show that the three premises of the Sea Battle Puzzle are valid in this logic but its conclusion is not, thus indicating that the reasoning behind it is not sound with respect to this logic. The logic $\mathsf{LTC}$ extends the Nexttime fragment of $\mathsf{CTL}^*$ with \emph{operators for model updates}, restricting the temporal domain to those future moments where antecedent is still possible to satisfy. As an illustrating example, we demonstrate that a suitably formalised in our logic version of the `Sea Battle' argument is not sound, thereby providing a solution to the problem with the fatalist conclusion that it entails. As discussed earlier, the technical core of the proposed solution is that the underlying reasoning per cases argument no longer applies when these cases are treated not as material implications but as temporal conditionals. On the technical side, we analyze the semantics of $\mathsf{LTC}$ and provide a series of reductions of $\mathsf{LTC}$-formulae, first recursively eliminating the dynamic operators and then the path quantifiers in such formulae. Using these reductions we obtain a sound and complete axiomatization for $\mathsf{LTC}$ and also eventually reduce its decision problem to that of the modal logic $\mathsf{KD}$.

Thus, we consider the contributions of this work to be two-fold: to develop a logical system for formalization and analysis of temporal conditionals, and to apply it to the clarification and solution of various philosophical logical problems that they present.

\medskip

\paragraph{Structure of the paper}

In Section \ref{section: the temporal logic with domain restrictors ltc} we introduce the logic of temporal conditionals \LTC and provide and discuss its formal semantics. In Section \ref{section: restrictors work as intended} we show that the update operator in \LTC properly captures temporal conditionals. Section \ref{section: our solution to the sea battle puzzle} proposes and discusses our  formal solution to the Sea Battle Puzzle, based on its formalisation in \LTC. In Section \ref{section: expressive power of ltc} we prove that, by means of effective translations, the model update operator can be eliminated and then the path quantifier and temporal operator in \emph{state formulas} can be replaced by the classical box modality. Using these results, in Section \ref{section: complete axiomatization of ltc} we establish a sound and complete axiomatization of \LTC and show its decidability. We conclude with brief remarks in Section \ref{section: concluding remarks}.

\section{The logic for temporal conditionals \LTC}
\label{section: the temporal logic with domain restrictors ltc}

We assume that the reader has basic familiarity with modal and branching-time temporal logics, e.g. within \cite[Chapters 4, 5, 7.1]{TLCSbook}.

Let $\Phi_0$ be a fixed countable set of atomic propositions and let $p$ range over it. The language $\Phi_{\LTC}$ of \LTC involves $\Phi_0$, the propositional connectives $\top, \neg, \land$ and the modalities $\A, \X$ and $[\cdot]$, where $\cdot$ stands for a formula (see further). $\Phi_{\LTC}$ and $\Phi_{\X [\cdot]}$, a fragment of $\Phi_{\LTC}$, are defined as follows:

\begin{displaymath}
\begin{array}{rl}
\Phi_{\X [\cdot]}: & \psi ::= p \,\,|\,\, \top \,\,|\,\, \neg \psi \,\,|\,\, (\psi \land \psi) \,\,|\,\, \X \psi \,\,|\,\, [\psi] \psi \\
\Phi_{\LTC}: & \phi ::= p \,\,|\,\, \top \,\,|\,\, \neg \phi \,\,|\,\, (\phi \land \phi) \,\,|\,\, \X \phi \,\,|\,\, \A \phi \,\,|\,\, [\psi] \phi \\
\end{array}
\end{displaymath}

\noindent The other propositional connectives: $\bot$ (falsum), $\lor, \to, \ifff$ are defined in the usual way. We also define $\mathbf{E} \phi := \neg \A \neg \phi$.

We can naturally distinguish state and path formulae of \LTC (just like in the logic $\mathsf{CTL}^*$), defined by mutual induction as follows:
\begin{displaymath}
\begin{array}{rl}
\text{State formulae } \Phi^{s}_{\LTC}: \ \ \ \ & \phi ::= p \mid \top \mid \neg \phi \mid (\phi \land \phi) \mid \A \theta \mid [\psi] \phi, \text{ where } \psi \in \Phi_{\mathsf{PC}} \\
\text{Path formulae } \Phi^{p}_{\LTC}: \ \ \ \ & \theta ::= \phi \mid \neg \theta \mid (\theta \land \theta) \mid \X \theta \mid [\psi] \theta, \text{ where } \psi \in \Phi_{\X [\cdot]} \\
\end{array}
\end{displaymath}

We define the abbreviation $\Box := \A\X$. What follow are four fragments of $\Phi_{\LTC}$ which will be used for technical purposes later.
\begin{displaymath}
\begin{array}{rl}
\Phi_{\mathsf{PC}}: & \psi ::= p \,\,|\,\, \top \,\,|\,\, \neg \psi \,\,|\,\, (\psi \land \psi) \\
\Phi_{\X}: & \psi ::= p \,\,|\,\, \top \,\,|\,\, \neg \psi \,\,|\,\, (\psi \land \psi) \,\,|\,\, \X \psi \\
\Phi_{\A\X}: & \psi ::= p \,\,|\,\, \top \,\,|\,\, \neg \psi \,\,|\,\, (\psi \land \psi) \,\,|\,\, \X \psi \,\,|\,\,\A \phi \\
\Phi_{\Box}: & \psi ::= p \,\,|\,\, \top \,\,|\,\, \neg \psi \,\,|\,\, (\psi \land \psi) \,\,|\,\, \Box \psi \\
\end{array}
\end{displaymath}

Let us give some intuition on the logical operators in \LTC. Informally, $\X \phi$ means that $\phi$ will be the case in the next moment. $\A \phi$ means that no matter how the agent will act in the future, $\phi$ is the case now, that is, $\phi$ is \emph{necessary}. $\A$ can be viewed as a universal quantifier over possible futures. Later we will see that possible futures in the semantic setting of \LTC are \emph{genuinely possible} futures in the intuitive sense. So $\A \phi$ indicates the relative necessity. Respectively, $\mathbf{E} \phi$ states that the agent has a way to act in the future so that $\phi$ is the case now, that is, $\phi$ is \emph{possible}\footnote{It seems strange to say that the agent has a way to act in the future so that $\phi$ is the case now. Actually this is fine, as whether a sentence involving future is true or not now might be dependent on how the agent will act in the future. Indeed, the truth of ``there will be a sea battle tomorrow'' depends on what the admirals of the fighting fleets will decide tonight.}.

Intuitively, $[\psi] \phi$ indicates that given $\psi$, $\phi$ is the case. Assume that one's decision to make $\psi$ true will always make $\psi$ true\footnote{In this paper we maintain the assumption that making a decision to do something will always result in this thing being done.}. Then another intended understanding of $[\psi] \phi$ is as follows: the occurrence of $\psi$ in $[\psi]$ represents the action of deciding to make $\psi$ true and $[\psi] \phi$ is read as that $\phi$ is the case after the agent decides to make $\psi$ true. This will be seen more clearly from the formal semantics.

\medskip

Now, some technical terminology and notation. Let $W$ be a nonempty set of states and let $R$ be a binary relation on it. A (finite or infinite) sequence $w_0 \dots w_n (...)$ of states is called an \emph{$R$-sequence} if $w_0 R \dots R w_n (...)$. Note that $w$ is an $R$-sequence, for any $w\in W$. Next, $(W, R)$ is a \emph{tree} if there is a state $r \in W$, called the \emph{root}, such that for any $w$, then there is a unique $R$-sequence starting with $r$ and ending with $w$. It is easy to see that if there is a root, then it is unique and $R$ is irreflexive. Further, $R$ is \emph{serial} if, for any $w\in W$ there is a $u \in W$ such that $Rwu$. We say that a tree $(W, R)$ is serial if $R$ is.

A serial tree can be understood as a time structure encoding an agent's actions (the transitions) and states in time (the nodes). A branching in the tree is interpreted as a situation where the agent can choose between different possible actions. The seriality corresponds to the fact that the agent can always perform an action at any given time.

Fix a serial tree $(W, R)$. A finite $R$-sequence $w_0 \dots w_n$ starting at the root is called a \emph{history} of $w_n$. 
For any states $w$ and $u$, $u$ is a \emph{historical} state of $w$ if there is a $R$-sequence $u_0 \dots u_n$ such that  $0 < n$, $u_0 = u$ and $u_n = w$. If $u$ is a historical state of $w$, we also say that $w$ is a \emph{future} state of $u$. Note that no state in a tree can be a historical or future state of itself.

An infinite $R$-sequence is called a \emph{path}. A path starting at the root is a \emph{timeline}. A path $\pi = w_0 w_1 \dots$ \emph{passes through} a state $x$ if $x = w_i$ for some $i$. For any path $\pi$, we use $\pi (i)$ to denote the $i+1$-th element of $\pi$, $^i \pi$ the prefix of $\pi$ to the $i+1$-th element, and $\pi^i$ the suffix of $\pi$ from the $i+1$-th element. For example, if $\pi = w_0 w_1 \dots$, then $\pi(2) = w_2$, $^2 \pi = w_0 w_1 w_2$ and $\pi^2 = w_2 w_3 \dots$. For any history $w_0 \dots w_n$ and path $u_0 u_1 \dots$, if $w_n = u_0$, let $w_0 \dots w_n \otimes u_0 \dots$ denote the timeline $w_0 \dots w_n u_1 \dots$.

$\M = (W, R, r, V)$ is a (\LTC-)\emph{model} if $(W, R)$ is a serial tree with $r$ as the root and $V$ is a valuation from $\Phi_0$ to $2^W$. Figure \ref{figure: an example of models} illustrates a model.

\begin{figure}[h]
\begin{center}
\begin{tikzpicture}[scale=0.8]

\draw[dotted] (-1,-2.5) rectangle (6,2.5);

\node[circle, draw] (w0) at (0,0) {$q$};

\node[circle, draw] (w1) at (1.5,1.5) {$p$}
edge[<-] node[auto] {} (w0);

\node[circle, draw] (w2) at (1.5,0) {$q$}
edge[<-] node[auto] {} (w0);

\node[circle, draw] (w4) at (3,0) {$q$}
edge[<-] node[auto] {} (w2);

\node[circle, draw] (w5) at (3,-1.5) {$s$}
edge[<-] node[auto] {} (w2);

\node[circle, draw] (w6) at (4.5,0) {$q$}
edge[<-] node[auto] {} (w4);

\node[circle, draw] (w7) at (4.5,1.5) {$s$}
edge[<-] node[auto] {} (w4);

\node () at (0,-0.65) {$w_0$};
\node () at (1.5,-0.65) {$w_1$};
\node () at (3,-0.65) {$w_2$};
\node () at (4.5,-0.65) {$w_3$};
\node () at (1.5,0.85) {$u$};
\node () at (4.5,0.85) {$v$};
\node () at (3,-2.15) {$x$};

\node () at (2.0,2.0) {$\udots$};
\node () at (5,2.0) {$\udots$};
\node () at (3.5,-2.0) {$\ddots$};
\node () at (5.3,0) {$\dots$};

\end{tikzpicture}
\end{center} \vspace{-5pt} \caption{A fragment of a \LTC-model.} \label{figure: an example of models}
\end{figure}

Now we will define by mutual induction two important semantic concepts: ``\emph{$\phi$ being true at a state $\pi(i)$ relative to a timeline $\pi$ in a model $\M$}'', denoted  $\M, \pi, i \Vdash \phi$ (Definition \ref{def:semantics}), and ``\emph{the result of updating $\M$ at $w$ with $\phi$}'', denoted $\M^{\phi}_w$ (Definition \ref{def:updates}). The mutual induction is needed because $\M, \pi, i \Vdash [\phi] \psi$ is defined in terms of $ \M^{\phi}_{\pi(i)}$ which, in turn, is defined in terms of $\M, \pi, i \Vdash \phi$.

\begin{definition}[Semantics]
\label{def:semantics} 
\begin{displaymath}
\begin{array}{lll}
\M, \pi, i \Vdash p & \ \Leftrightarrow \ & \pi(i) \in V(p) \\
\M, \pi, i \Vdash \top && \\
\M, \pi, i \Vdash \neg \phi & \ \Leftrightarrow \ & \text{not } \M, \pi, i \Vdash \phi \\
\M, \pi, i \Vdash \phi \land \psi & \ \Leftrightarrow \ & \M, \pi, i \Vdash \phi \text{ and } \M, \pi, i \Vdash \psi \\
\M, \pi, i \Vdash \X \phi & \ \Leftrightarrow \ & \M, \pi, i+1 \Vdash \phi \\
\M, \pi, i \Vdash \A \phi & \ \Leftrightarrow \ & \text{for any path $\rho$ starting at $\pi(i)$, } \M, {^i \pi} \otimes \rho, i \Vdash \phi \\
\M, \pi, i \Vdash [\phi] \psi & \ \Leftrightarrow \ & \M^{\phi}_{\pi(i)}, \pi, i \Vdash \psi \text{ if $(\M^{\phi}_{\pi(i)}, \pi)$ is well-defined (see below)}
\end{array}
\end{displaymath}

\vspace{-3mm}
\hspace{-3mm}
\text{(Note that $\M, \pi, i \Vdash [\phi] \psi$ holds vacuously if $(\M^{\phi}_{\pi(i)}, \pi)$ is not well-defined.)}
\end{definition}

The truth condition of $\A \phi$ at $\M, \pi, i$ can also be equivalently stated as follows: $\M, \tau, i \Vdash \phi$ for any timeline $\tau$ passing through $\pi(i)$. If $\pi$ is a timeline in $\M$ and $\pi(i) = w$, we will sometimes write $\M, \pi, w \Vdash \phi$ instead of $\M, \pi, i \Vdash \phi$. We say that $\phi$ is \emph{achievable} at $w$ in $\M$ if $\M, \pi, w \Vdash \E \phi$, i.e. $\M, \pi, w \Vdash \phi $ for some timeline $\pi$ containing $w$.

Note that the truth of a state formula at a state, relative to a timeline, is not dependent on the timeline, but this is not so for path formulae. Sometimes, if $\phi$ is a state formula, we write $\M, w \Vdash \phi$ without specifying a timeline. 

\begin{definition}[Model updates and well-definedness]
\label{def:updates}
Suppose $\phi$ is achievable at $w$ in $\M$. We define the set $X^{\phi}_w \subseteq W$ as follows: for any $x \in W$, $x \in X^{\phi}_w$ iff

\begin{enumerate}
\item $x$ is a future state of $w$, and
\item there is no timeline $\rho$ passing through $w$ and $x$ such that $\M, \rho, w \Vdash \phi$.
\end{enumerate}

Now, we define $\M^{\phi}_w = (W - X^{\phi}_w, R', r, V')$ as the \emph{restriction} of $\M$ to $W - X^{\phi}_w$, that is, $R' = R \cap (W - X^{\phi}_w)^{2}$ and $V'(p) = V(p) \cap (W - X^{\phi}_w)$ for each $p \in \Phi_0$.

If $\phi$ is not achievable at $w$, we declare $\M^{\phi}_w$ undefined because in this case all successors of $w$ are in $X^{\phi}_w$ and therefore $R'$ is not serial.

If  $\pi$ is a timeline in $\M$, such that $\M^{\phi}_{\pi(i)}$ is defined, and $\pi$ belongs to $\M^{\phi}_{\pi(i)}$, then we say that $(\M^{\phi}_{\pi(i)}, \pi)$ is \emph{well-defined}.
\end{definition}

Suppose that $\phi$ is achievable at $w$. It can be easily verified that $r \notin X^{\phi}_w$ and $X^{\phi}_w$ is closed under $R$. So $(W - X^{\phi}_w, R')$ is still a tree with $r$ as the root. It can also be verified that $R'$ is serial when 
 $\M^{\phi}_w$ is defined. Then $\M^{\phi}_w$ is a $\mathsf{LTC}$-model. Later, we will see that a timeline $\pi$ in $\M$ is a timeline in $\M^{\phi}_{\pi(i)}$ iff $\M, \pi, i \Vdash \phi$.

Updating $\M$ with $\phi$ at $w$ means to shrink $\M$ by removing the states in $X^{\phi}_w$. The set $X^{\phi}_w$ can be understood as follows. Assume that the agent is at $w$ and decides to make $\phi$ true. After the decision is made, some future states are not possible anymore. A state becomes impossible if, when the agent travels in time to it, it would no longer be possible to make $\phi$ true at $w$, no matter what happens afterwards. $X^{\phi}_w$ is the collection of these states. Figure \ref{figure: an example of updating a model with a formula} illustrates how a formula updates a model.

\begin{figure}[h]
\begin{center}
\begin{tikzpicture}[scale=0.75]


\draw[dotted] (-1,-2.5) rectangle (6,2.5);

\node[circle, draw, thick] (w0) at (0,0) {$q$};

\node[circle, draw] (w1) at (1.5,1.5) {$p$}
edge[<-] node[auto] {} (w0);

\node[circle, draw] (w2) at (1.5,0) {$q$}
edge[<-] node[auto] {} (w0);

\node[circle, draw] (w4) at (3,0) {$q$}
edge[<-] node[auto] {} (w2);

\node[circle, draw] (w5) at (3,-1.5) {$s$}
edge[<-] node[auto] {} (w2);

\node[circle, draw] (w6) at (4.5,0) {$q$}
edge[<-] node[auto] {} (w4);

\node[circle, draw] (w7) at (4.5,1.5) {$s$}
edge[<-] node[auto] {} (w4);

\node () at (0,-0.65) {$w_0$};
\node () at (1.5,-0.65) {$w_1$};
\node () at (3,-0.65) {$w_2$};
\node () at (4.5,-0.65) {$w_3$};
\node () at (1.5,0.85) {$u$};
\node () at (4.5,0.85) {$v$};
\node () at (3,-2.15) {$x$};

\node () at (2.0,2.0) {$\udots$};
\node () at (5,2.0) {$\udots$};
\node () at (3.5,-2.0) {$\ddots$};
\node () at (5.3,0) {$\dots$};

\node () at (2.5,-3) {$\M$};


\draw[dotted] (7,-2.5) rectangle (14,2.5);

\node[circle, draw, thick] (w0) at (8,0) {$q$};

\node[circle, draw] (w2) at (9.5,0) {$q$}
edge[<-] node[auto] {} (w0);

\node[circle, draw] (w4) at (11,0) {$q$}
edge[<-] node[auto] {} (w2);

\node[circle, draw] (w6) at (12.5,0) {$q$}
edge[<-] node[auto] {} (w4);

\node[circle, draw] (w7) at (12.5,1.5) {$s$}
edge[<-] node[auto] {} (w4);

\node () at (8,-0.65) {$w_0$};
\node () at (9.5,-0.65) {$w_1$};
\node () at (11,-0.65) {$w_2$};
\node () at (12.5,-0.65) {$w_3$};
\node () at (12.5,0.85) {$v$};

\node () at (13,2.0) {$\udots$};
\node () at (13.3,0) {$\dots$};

\node () at (10.5,-3) {$\M^{\X q \land \mathrm{XX} \neg s}_{w_0}$};

\end{tikzpicture}
\end{center} \vspace{-8pt} \caption{This figure illustrates how a model is updated and reduced by a formula. The model on the right is the result of updating the one on the left at $w_0$ with $\X q \land \mathrm{XX} \neg s$.} \label{figure: an example of updating a model with a formula}
\end{figure}

A formula $\phi$ of \LTC is \emph{valid} (resp., \emph{satisfiable}) if $\M, \pi, i \Vdash \phi$ for any (resp., for some) $\M$, $\pi$ and $i$. Note that a path formula $\phi$ is valid (resp., satisfiable) iff the state formula $\A\phi$ is valid (resp., $\E\phi$ is satisfiable). A set $\Gamma$ of formulae \emph{entails} a formula $\phi$, denoted $\Gamma \models \phi$, if for any $\M$, $\pi$ and $i$, if $\M, \pi, i \Vdash \psi$ for each $\psi \in \Gamma$, then $\M, \pi, i \Vdash \phi$. We write $\psi_1, \dots, \psi_n \models \phi$ for $\{\psi_1, \dots, \psi_n\} \models \phi$. As usual, we also say that $\phi$ and $\psi$ are \emph{equivalent}, denoted $\psi \equiv \phi$, if $\psi \models \phi$ and $\phi \models \psi$. Note that, as expected, $\equiv$ is a congruence with respect to all operators in the language of  $\LTC$, incl. $[\cdot]$: if $\phi, \psi \in \Phi_{\X[\cdot]}$ and $\phi \equiv \psi$, then $[\phi] \chi \equiv [\psi]\chi$ for any $\chi \in  \Phi_{\LTC}$, and if $\chi, \theta \in  \Phi_{\LTC}$ and $\chi \equiv \theta$, then $[\phi] \chi \equiv [\phi] \theta$ for any $\phi \in \Phi_{\X[\cdot]}$.

\section{Restrictors work as intended}
\label{section: restrictors work as intended}

The update with $\phi$ at a state in a model restricts the class of possible futures starting at that state. The following theorem indicates that it restricts the class of possible futures exactly as we wish: it excludes those possible futures which do not satisfy $\phi$.

For any $\phi$ in $\Phi_{\LTC}$, we define \emph{the temporal depth} $\phi^{td}$ (intuitively, $\phi^{td}$ indicates how far in the future $\phi$ can see) and \emph{the combined depth} $\phi^{cd}$ of $\phi$, as follows:

\begin{displaymath}
\begin{array}{ll}
p^{td} = 0 \ \ & \ \  p^{cd} = 0 \\
\top^{td} = 0 \ \ & \ \  \top^{cd} =  0 \\
(\neg \phi)^{td} = \phi^{td} \ \ & \ \  (\neg \phi)^{cd} = \phi^{cd} \\
(\phi \land \psi)^{td}  = \max \{\phi^{td}, \psi^{td}\} \ \ & \ \  (\phi \land \psi)^{cd}  = \max \{\phi^{cd}, \psi^{cd}\} \\
(\X \phi)^{td}  =  \phi^{td} + 1 \ \ & \ \  (\X \phi)^{cd}  =  \phi^{cd} + 1 \\
(\A \phi)^{td}  =  \phi^{td} \ \ & \ \  (\A \phi)^{cd}  =  \phi^{cd} + 1 \\
([\phi] \psi)^{td}  =  \max \{\phi^{td}, \psi^{td}\} \ \ & \ \  ([\phi] \psi)^{cd}  =  \max \{\phi^{cd}, \psi^{cd}\} \\
\end{array}
\end{displaymath}

\begin{lemma} 
\label{lemma: characterization of survival of paths}
Let $\phi$ be a formula in $\Phi_{\X [\cdot]}$, $\M$ be a model and $w$ be a state in it.

\begin{enumerate}
\item[(a)] For any natural number $n \geq \phi^{td}$ and any pair of timelines $\pi$ and $\tau$ of $\M$ passing through $w$ and sharing the same $n$ states after $w$, it holds that \\ $\M, \pi, w \Vdash \phi$ iff $\M, \tau, w \Vdash \phi$.
\item[(b)] For any timeline $\pi$ of $\M$ passing through $w$, $(\M^\phi_w, \pi)$ is well-defined iff $\M, \pi, w \Vdash \phi$.
\end{enumerate}
\end{lemma}

\begin{proof}
We will prove both claims simultaneously by induction on $\phi$. The base cases $\phi = p$ and $\phi = \top$ are straightforward. Suppose that both claims hold for all subformulae of $\phi$. We spell out here only the two non-trivial cases.

\medskip

Case $\phi = \X \psi$.

(a) Let $n \geq (\X \psi)^{td}$ and $\pi$ and $\tau$ be two timelines of $\M$ passing through $w$ and sharing the same $n$ states after $w$. Let $\pi(i) = \tau(i) = w$. We have the following equivalences:

\begin{enumerate}[leftmargin = 3em, rightmargin = 3em]
\item[] $\M, \pi, i  \Vdash \X \psi$
\item[$\Leftrightarrow$] $\M, \pi, i+1  \Vdash \psi$
\item[$^*\Leftrightarrow$] $\M, \tau, i+1  \Vdash \psi$
\item[$\Leftrightarrow$] $\M, \tau, i  \Vdash \X \psi$
\end{enumerate}

\noindent The equivalence marked by * holds by the inductive hypothesis for (a) applied to $\psi$, as $\pi$ and $\tau$ share the same $n-1$ states after $\pi(i+1)$ and $\psi^{td} \leq n-1$.

(b) Suppose $\pi$ is a timeline in  $\M$ passing through $w$ and $\pi(i) = w$. Note that $(\M^{\X \psi}_w, \pi)$ is well-defined iff $\M^{\X \psi}_w$ is defined and $\pi$ is a path in  it. Clearly if $\M, \pi, w \Vdash \X \psi$ then $\M^{\X \phi}_w$ is defined. To show that $(\M^{\mathrm{X} \psi}_w, \pi)$ is well-defined iff $\M, \pi, w \Vdash \mathrm{X} \psi$, it now suffices to show that, assuming $\M^{\X \phi}_w$ is defined, $\pi$ is a path in $\M^{\X \psi}_w$ iff $\M, \pi, w \Vdash \X \psi$. Assume $\M^{\X \psi}_w$ is defined. Then:

\begin{enumerate}[leftmargin = 3em, rightmargin = 3em]
\item[] $\pi$ is a path in $\M^{\X \psi}_w$
\item[$\Leftrightarrow$] for any $u$ of $\pi$, if $u$ is a future state of $w$, then there is a timeline $\tau$ passing through $u$ such that $\M, \tau, i \Vdash \X \psi$, i.e. $\M, \tau, i+1 \Vdash \psi$
\item[$^*\Leftrightarrow$] $\M, \pi, i+1 \Vdash \psi$
\item[$\Leftrightarrow$] $\M, \pi, i \Vdash \X \psi$
\end{enumerate}

\noindent Here is why the marked equivalence holds. The direction from the right to the left is clear. Assume the left. Let $n$ be a natural number such that $\psi^{td} \leq n$. Let $u = \pi(i + 1 + n)$. Then there is a timeline $\tau$ passing through $u$ such that $\M, \tau, i+1 \Vdash \psi$. Note that $\pi$ and $\tau$ share the same $n$ states after $\pi(i+1)$. By the inductive hypothesis for (a) applied to $\psi$, we have $\M, \pi, i+1 \Vdash \psi$.

\medskip

Case $\phi = [\psi] \chi$.

(a)  Let $n \geq ([\psi]\chi)^{td}$ and $\pi$ and $\tau$ be two timelines of $\M$ passing through $w$ and sharing the same $n$ states after $w$. We have the following equivalences:

\begin{enumerate}[leftmargin = 3em, rightmargin = 3em]
\item[] $\M, \pi, w \Vdash [\psi]\chi$
\item[$\Leftrightarrow$] $(\M^{\psi}_w, \pi)$ is not well-defined or $\M^\psi_w, \pi, w \Vdash \chi$
\item[$^*\Leftrightarrow$] $\M, \pi, w \not \Vdash \psi$ or $\M^\psi_w, \pi, w \Vdash \chi$
\item[$^*\Leftrightarrow$] $\M, \tau, w \not \Vdash \psi$ or $\M^\psi_w, \tau, w \Vdash \chi$
\item[$^*\Leftrightarrow$] $(\M^{\psi}_w, \tau)$ is not well-defined or $\M^\psi_w, \tau, w \Vdash \chi$
\item[$\Leftrightarrow$] $\M, \tau, w \Vdash [\psi] \chi$
\end{enumerate}

\noindent The first and third equivalence marked by * holds by the inductive hypothesis for (b) applied to $\psi$. The second equivalence marked by * holds by the inductive hypothesis for (a) applied to $\psi$, as $\psi^{td} \leq n$.

(b) Let $\pi$ be a timeline in $\M$ passing through $w$ and $\pi(i) = w$. Again, note that $(\M^{[\psi] \chi}_w, \pi)$ is well-defined iff $\M^{[\psi] \chi}_w$ is defined and $\pi$ is a path in it. Also note that if $\M, \pi, w \Vdash [\psi] \chi$, then $\M^{[\psi] \chi}_w$ is defined. To show that $(\M^{[\psi] \chi}_w, \pi)$ is well-defined iff $\M, \pi, w \Vdash [\psi] \chi$, it now suffices to show that, assuming $\M^{[\psi] \chi}_w$ is defined, $\pi$ is a path in  $\M^{[\psi] \chi}_w$ iff $\M, \pi, w \Vdash [\psi] \chi$. Assume that $\M^{[\psi] \chi}_w$ is defined. Then:

\begin{enumerate}[leftmargin = 3em, rightmargin = 3em]
\item[] $\pi$ is a path in  $\M^{[\psi] \chi}_w$
\item[$\Leftrightarrow$] for any $u$ of $\pi$, if $u$ is a future state of $w$, then there is a timeline $\tau$ passing through $u$ such that $\M, \tau, w \Vdash [\psi] \chi$
\item[$\Leftrightarrow$] for any $u$ of $\pi$, if $u$ is a future state of $w$, then there is a timeline $\tau$ passing through $u$ such that if $(\M^{\psi}_w, \tau)$ is well-defined, then $\M^\psi_w, \tau, w \Vdash \chi$
\item[$^*\Leftrightarrow$] for any $u$ of $\pi$, if $u$ is a future state of $w$, then there is a timeline $\tau$ passing through $u$ such that if $\M, \tau, w \Vdash \psi$, then $\M^\psi_w, \tau, w \Vdash \chi$
\item[$^*\Leftrightarrow$] if $\M, \pi, w \Vdash \psi$, then $\M^\psi_w, \pi, w \Vdash \chi$
\item[$\Leftrightarrow$] if $(\M^\psi_w, \pi)$ is well-defined, then $\M^\psi_w, \pi, w \Vdash \chi$
\item[$\Leftrightarrow$] $\M, \pi, w \Vdash \chi$
\end{enumerate}

\noindent The first equivalence marked by * holds by the inductive hypothesis for (b) applied to $\psi$. Here is why the second *-marked equivalence holds. The direction from the right to the left is clear. Assume the left. Let $n$ be a natural number such that $\psi^{td}, \chi^{td} \leq n$. Let $u = \pi(i + n)$. Then there is a timeline $\tau$ passing through $u$ such that  if $\M, \tau, w \Vdash \psi$, then $\M^\psi_w, \tau, w \Vdash \chi$. Note that $\pi$ and $\tau$ share the same $n$ elements after $w$. By the inductive hypothesis for (a) applied to $\psi$ and $\chi$, we have that if $\M, \pi, w \Vdash \psi$, then $\M^\psi_w, \pi, w \Vdash \chi$.
\end{proof}

The next corollary follows immediately from Lemma \ref{lemma: characterization of survival of paths}.

\begin{corollary} \label{corollary: the dynamic operator restricts possible futures as we wish}
Let $\M$ be a model, $w$ a state and $\phi$ a formula of $\Phi_{\X [\cdot]}$ achievable at $w$ (and therefore $\M^\phi_w$ is defined). Then any timeline $\pi$ in $\M$ passing through $w$ is also a timeline in  $\M^\phi_w$ iff $\M, \pi, w \Vdash \phi$.
\end{corollary}

The following lemma states that $\Phi_{\X [\cdot]}$ can be effectively reduced to $\Phi_{\X}$.

\begin{lemma} \label{lemma: reduction for simple consequent}
For any $\phi$ in $\Phi_{\X [\cdot]}$, there is a $\psi$ in $\Phi_{\X}$ equivalent to $\phi$.
\end{lemma}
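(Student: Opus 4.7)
The plan is to define a translation $t \colon \Phi_{\mathrm{X}[\cdot]} \to \Phi_{\mathrm{X}}$ by structural recursion: $t$ is the identity on $p$ and $\top$, commutes with $\neg$, $\land$ and $\mathrm{X}$, and handles the dynamic operator by the single clause
\[
t([\phi]\psi) \;=\; t(\phi) \to t(\psi).
\]
The lemma will then follow from the claim, proved by induction on $\phi \in \Phi_{\mathrm{X}[\cdot]}$, that $\phi$ and $t(\phi)$ are equivalent in every model at every point.

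Before running that induction, I would establish one auxiliary invariance fact: for every $\chi \in \Phi_{\mathrm{X}}$, every model $\mathfrak{M}$, every state $w$ and every formula $\phi$ for which $\mathfrak{M}^\phi_w$ is defined, and every timeline $\pi$ of $\mathfrak{M}$ that is also a timeline of $\mathfrak{M}^\phi_w$, we have $\mathfrak{M}, \pi, i \Vdash \chi$ iff $\mathfrak{M}^\phi_w, \pi, i \Vdash \chi$. This is a short induction on $\chi$: a $\Phi_{\mathrm{X}}$-formula only consults the valuations of atoms along $\pi$, and the restriction $\mathfrak{M}^\phi_w$ agrees with $\mathfrak{M}$ on the valuation of every state that appears on $\pi$.

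The main induction is then immediate in the atomic, Boolean and $\mathrm{X}$ cases. The substantive case is $[\phi]\psi$. Fixing $\mathfrak{M}$, a timeline $\pi$ and $w = \pi(i)$, and using the inductive hypotheses $\phi \equiv t(\phi)$ and $\psi \equiv t(\psi)$, I split on whether $\mathfrak{M}, \pi, w \Vdash \phi$. If it does, then $\phi$ is achievable at $w$, so $\mathfrak{M}^\phi_w$ is defined, and Theorem \ref{theorem: the dynamic operator restricts possible futures as we wish} guarantees that $\pi$ is a timeline of $\mathfrak{M}^\phi_w$; chaining the semantic clause for $[\phi]\psi$, the inductive hypothesis on $\psi$, and the invariance fact, I obtain
\[
\mathfrak{M}, \pi, i \Vdash [\phi]\psi \;\Leftrightarrow\; \mathfrak{M}^\phi_w, \pi, i \Vdash \psi \;\Leftrightarrow\; \mathfrak{M}^\phi_w, \pi, i \Vdash t(\psi) \;\Leftrightarrow\; \mathfrak{M}, \pi, i \Vdash t(\psi),
\]
which matches $t(\phi) \to t(\psi)$ since $t(\phi)$ holds at $(\pi, i)$. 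If instead $\mathfrak{M}, \pi, w \not\Vdash \phi$, then either $\phi$ is unachievable at $w$ (so $\mathfrak{M}^\phi_w$ is undefined) or Theorem \ref{theorem: the dynamic operator restricts possible futures as we wish} yields that $\pi$ is not a timeline of $\mathfrak{M}^\phi_w$; in either subcase $(\mathfrak{M}^\phi_w, \psi)$ is not well given, so $[\phi]\psi$ holds vacuously, and $t(\phi) \to t(\psi)$ is vacuous because $t(\phi)$ fails.

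The only place that genuinely requires thought is this update case, and it is overcome by combining Theorem \ref{theorem: the dynamic operator restricts possible futures as we wish} (to convert ``$\pi$ survives the restriction'' into ``$\phi$ holds along $\pi$'') with the invariance of $\Phi_{\mathrm{X}}$-truth along a fixed timeline. Once both ingredients are in hand, the reduction $[\phi]\psi \equiv t(\phi) \to t(\psi)$ is essentially forced; well-foundedness of $t$ is immediate, since each recursive call is made on a proper subformula.
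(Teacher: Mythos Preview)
Your proposal is correct and follows essentially the same route as the paper's sketch. The paper observes that from Lemma~\ref{lemma: characterization of survival of paths} one gets the ``simple fact'' $[\phi]\psi \equiv \phi \to \psi$ for $\phi,\psi \in \Phi_{\mathrm{X}}$ and then inducts; your explicit translation $t$, your invariance fact for $\Phi_{\mathrm{X}}$-formulas along a fixed timeline, and your appeal to Theorem~\ref{theorem: the dynamic operator restricts possible futures as we wish} (itself a corollary of Lemma~\ref{lemma: characterization of survival of paths}) are exactly the ingredients needed to justify that simple fact and carry out the induction the paper leaves implicit.
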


\begin{proof}
We first show that for any $\psi \in \Phi_{\X}$, $[\phi] \psi$ is equivalent to $\phi \rightarrow \psi$. Note that the truth of the formulae in $\Phi_{\X}$ at a state relative to a path in a model is completely determined by the information of the state and the path. So for any $\psi \in \Phi_{\X}$, $\mathfrak{M}, \pi, w \Vdash \psi$ iff $\mathfrak{M}^\phi_w, \pi, w \Vdash \psi$. Let $\psi$ be in $\Phi_{\X}$. Then: $\mathfrak{M}, \pi, w \Vdash [\phi] \psi$ $\Leftrightarrow$ if $\mathfrak{M}, \pi, w \Vdash \phi$ then $\mathfrak{M}^\phi_w, \pi, w \Vdash \psi$ $\Leftrightarrow$ if $\mathfrak{M}, \pi, w \Vdash \phi$ then $\mathfrak{M}, \pi, w \Vdash \psi$ $\Leftrightarrow$ $\mathfrak{M}, \pi, w \Vdash \phi \rightarrow \psi$. For any $\chi$ in $\Phi_{\X [\cdot]}$, by use of the previous result, we can eliminate step-by-step the update operator $[\cdot]$ in $\chi$.
\end{proof}

Next theorem points out some connections between $[\phi] \psi$ and $\phi \rightarrow \psi$.

\begin{theorem} \label{theorem:collapse}
$[\phi] \psi$ collapses to $\phi \rightarrow \psi$ if either of the following holds: 
\begin{enumerate}
\item[(a)] $\phi$ is a state formula;
\item[(b)] $\psi$ is in $\Phi_{\X[\cdot]}$.
\end{enumerate}
\end{theorem}

\begin{proof}
(a) Suppose $\phi$ is a state formula. Let $\M$ be a model and $w$ a state in it. Then $\phi$ is either true or false at $w$. Suppose first that $\phi$ is true at $w$. Then the update with $\phi$ at $w$ does not change $\M$. Then for any timeline $\pi$ passing through $w$, $[\phi] \psi$ is true at $w$ relative to $\pi$ iff $\psi$ -- which is now equivalent to $\phi \rightarrow \psi$ -- is true at $w$ relative to $\pi$. Now, suppose $\phi$ is false at $w$. Then the update with $\phi$ at $w$ fails, so both $[\phi] \psi$ and $\phi \rightarrow \psi$ are trivially true at $w$ relative to any timeline.

(b) Suppose $\psi$ is in $\Phi_{\X[\cdot]}$. By Lemma \ref{lemma: reduction for simple consequent}, there is a $\psi' \in \Phi_{\X}$ equivalent to $\psi$. Then $[\phi] \psi$ is equivalent to $[\phi] \psi'$. By the proof for Lemma \ref{lemma: reduction for simple consequent}, $[\phi] \psi'$ is equivalent to $\phi \rightarrow \psi'$.
\end{proof}

Note that $[\phi] \psi$ does not generally collapse to $\phi \rightarrow \psi$ if $\phi$ is not a state formula and $\psi$ contains the operator $\mathbf{A}$. A typical example is $[\mathrm{X} s] \mathbf{A} \mathrm{X} s$: e.g., if there is a sea battle tomorrow, it is necessarily so.

\section{Our solution to the Sea Battle Puzzle}
\label{section: our solution to the sea battle puzzle}

We first show that the principle of relative necessity of truth, formalised in the logic as $[\phi] \A \phi$, is valid.

\begin{lemma} \label{lemma: having nothing to do with other timelines}
Let $(\M^{\phi}_{w}, \pi)$ be well-defined. For any $\psi$ in $\Phi_{\X [\cdot]}$, $\M^{\phi}_{w}, \pi, w \Vdash \psi$ iff $\M, \pi, w \Vdash \psi$.
\end{lemma}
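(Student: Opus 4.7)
The plan is to invoke \textbf{Lemma \ref{lemma: reduction for simple consequent}} to replace $\psi$ by an equivalent formula in $\Phi_{\mathrm{X}}$, and then to exploit the fact that formulas in $\Phi_{\mathrm{X}}$ depend only on the timeline being evaluated and on the atomic valuations of its states---neither of which is altered by passing from $\mathfrak{M}$ to $\mathfrak{M}^{\phi}_{w}$ along a timeline $\pi$ that survives the update.

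First, I would apply \textbf{Lemma \ref{lemma: reduction for simple consequent}} to obtain some $\psi' \in \Phi_{\mathrm{X}}$ equivalent to $\psi$. Since equivalence of formulas means agreement at every pointed timeline of every model, this yields both $\mathfrak{M}, \pi, w \Vdash \psi$ iff $\mathfrak{M}, \pi, w \Vdash \psi'$ and $\mathfrak{M}^{\phi}_{w}, \pi, w \Vdash \psi$ iff $\mathfrak{M}^{\phi}_{w}, \pi, w \Vdash \psi'$. Hence it suffices to prove the biconditional for formulas in $\Phi_{\mathrm{X}}$.

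Next, I would prove by induction on $\chi \in \Phi_{\mathrm{X}}$ the slightly strengthened statement: whenever $\pi$ is a timeline of $\mathfrak{M}^{\phi}_{w}$, then for every index $j$ we have $\mathfrak{M}^{\phi}_{w}, \pi, j \Vdash \chi$ iff $\mathfrak{M}, \pi, j \Vdash \chi$. The atomic case reduces to the identity $V'(p) = V(p) \cap (W - X^{\phi}_{w})$ together with the fact that $\pi(j) \in W - X^{\phi}_{w}$ because $\pi$ is a timeline of $\mathfrak{M}^{\phi}_{w}$. The Boolean cases are immediate. For $\mathrm{X} \chi'$, the semantic clause only inspects $\pi(j+1)$, which is again a state of $\mathfrak{M}^{\phi}_{w}$ on the surviving timeline $\pi$, so the inductive hypothesis at $j+1$ applies. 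The original lemma is then the instance in which $\chi = \psi'$ and $j$ is the index of $w$ on $\pi$.

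The main obstacle, modest as it is, lies in formulating the induction uniformly over all indices on $\pi$ rather than only at $w$, so that the $\mathrm{X}$-clause can legitimately invoke the inductive hypothesis one step further along the same surviving timeline. Once this is arranged, the conclusion is essentially a direct reading of the semantic clauses against the definition of restriction, plus the appeal to the reduction lemma to dispose of the $[\cdot]$ operator in $\psi$.
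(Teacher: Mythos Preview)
Your proposal is correct and follows essentially the same route as the paper: the paper's one-line justification is that formulas without $\mathbf{A}$ and $[\cdot]$ have their truth determined solely by the timeline, and that $\Phi_{\mathrm{X}[\cdot]}$ reduces to $\Phi_{\mathrm{X}}$ via \textbf{Lemma \ref{lemma: reduction for simple consequent}}. You have simply made explicit the induction on $\Phi_{\mathrm{X}}$-formulas (and the need to quantify over all indices $j$ to handle the $\mathrm{X}$-step) that the paper leaves to the reader.
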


\begin{proof}
By Lemma \ref{lemma: reduction for simple consequent}, $\Phi_{\X [\cdot]}$ can be reduced to $\Phi_{\X}$. As mentioned in the proof for Lemma \ref{lemma: reduction for simple consequent}, the truth of the formulae in $\Phi_{\X}$ at a state relative to a timeline in a model is completely determined by the information of the state and the timeline. Then the claim follows.
\end{proof}

\begin{theorem} \label{theorem: update works as wished}
$[\phi] \A \phi$ is valid, where $\phi \in \Phi_{\X [\cdot]}$.
\end{theorem}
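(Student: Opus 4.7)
The plan is to unpack the semantics of $[\phi] \mathbf{A} \phi$, dispose of the trivial "not well given" cases, and then reduce what remains to Theorem \ref{theorem: the dynamic operator restricts possible futures as we wish} via Lemma \ref{lemma: having nothing to do with other timelines}.

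Fix an arbitrary model $\mathfrak{M}$, timeline $\pi$, and index $i$, and set $w = \pi(i)$. First I would handle the degenerate cases. If $\phi$ is not achievable at $w$, then $\mathfrak{M}^{\phi}_{w}$ is undefined, hence $(\mathfrak{M}^{\phi}_{w}, \mathbf{A}\phi)$ is not well given, and $\mathfrak{M}, \pi, i \Vdash [\phi] \mathbf{A} \phi$ holds by the semantic clause for $[\cdot]$. Likewise, if $\mathfrak{M}^{\phi}_{w}$ is defined but $\pi$ fails to be a timeline of it, the same clause gives the result trivially. So I may assume $\phi$ is achievable at $w$ and $\pi$ is a timeline of $\mathfrak{M}^{\phi}_{w}$.

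In this case the semantics of $[\cdot]$ reduces the goal to $\mathfrak{M}^{\phi}_{w}, \pi, i \Vdash \mathbf{A}\phi$, which by the clause for $\mathbf{A}$ says: for every path $\rho$ in $\mathfrak{M}^{\phi}_{w}$ starting at $w$, $\mathfrak{M}^{\phi}_{w}, \tau, i \Vdash \phi$, where $\tau = {^i\pi} \otimes \rho$. A key sub-step is to observe that $\tau$ is indeed a timeline of $\mathfrak{M}^{\phi}_{w}$ passing through $w$: its prefix $^i\pi$ survives the update because $\pi$ does, and its suffix lies in $\mathfrak{M}^{\phi}_{w}$ by the choice of $\rho$. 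Consequently $(\mathfrak{M}^{\phi}_{w}, \tau)$ is well given.

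Now I would invoke Lemma \ref{lemma: having nothing to do with other timelines} (applicable since $\phi \in \Phi_{\mathrm{X} [\cdot]}$ and $(\mathfrak{M}^{\phi}_{w}, \tau)$ is well given) to rewrite $\mathfrak{M}^{\phi}_{w}, \tau, w \Vdash \phi$ as $\mathfrak{M}, \tau, w \Vdash \phi$. The latter is then immediate from Theorem \ref{theorem: the dynamic operator restricts possible futures as we wish}, since $\tau$ is a timeline of $\mathfrak{M}^{\phi}_{w}$ passing through $w$. The only delicate point in the whole argument is verifying that the concatenation $\tau$ genuinely remains inside $\mathfrak{M}^{\phi}_{w}$; once that bookkeeping is done, the conclusion falls out by chaining the two prior results, which is why I expect no real obstacle here.
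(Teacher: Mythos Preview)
Your argument is correct and follows essentially the same route as the paper's proof: both reduce the claim, via Lemma~\ref{lemma: having nothing to do with other timelines}, to the fact that any timeline surviving the update at $w$ must satisfy $\phi$ at $w$ in the original model. The only cosmetic differences are that you argue directly while the paper argues by contradiction, and you cite Theorem~\ref{theorem: the dynamic operator restricts possible futures as we wish} where the paper cites the underlying Lemma~\ref{lemma: characterization of survival of paths}(b); your more careful bookkeeping (the degenerate cases and the check that $\tau = {^i\pi}\otimes\rho$ remains a timeline of $\mathfrak{M}^{\phi}_{w}$) is implicit in the paper's version.
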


\begin{proof}
Suppose $\M, \pi, w \not\Vdash [\phi] \A \phi$. This means that $(\M^{\phi}_w, \pi)$ is well-defined but
$\M^{\phi}_w, \pi, w \not\Vdash \A \phi$. The latter condition entails the existence of a timeline $\tau$ in $\M^{\phi}_w$ passing through $w$ such that $\M^{\phi}_w, \tau, w \not \Vdash \phi$. By {Lemma \ref{lemma: having nothing to do with other timelines}}, $\M, \tau, w \not \Vdash \phi$. By {Corollary \ref{corollary: the dynamic operator restricts possible futures as we wish}}, $\tau$ is not a timeline in $\M^{\phi}_w$ -- a contradiction.
\end{proof}

Let $s$ denote the proposition ``there is a sea battle''. The Sea Battle Puzzle can now be formalised as \[ \X s \lor \X \neg s, [\X s] \A \X s, [\X \neg s] \A \X \neg s \models \A \X s \lor \A \X \neg s.\] It is easy to show that $\X \phi \lor \X \neg \phi$ is valid. The validity of the other two premises follows from Theorem \ref{theorem: update works as wished}. On the other hand, it is also easy to see that $\A \X s \lor \A \X \neg s$ is not valid. Therefore, the logical argument behind the Puzzle of Sea Battle is not sound.

\begin{remark}
There have been various proposed solutions to the Sea Battle Puzzle in the literature, including 
{\L}ukasiewicz's three-valued logic \cite{McCall67}, Prior's Peircean temporal logic \cite{Prior67}, Prior's Ockhamist temporal logic \cite{Prior67}, the true futurist theory \cite{Ohrstrom81}, the supervaluationist theory \cite{Thomason70} and the relativist theory \cite{MacFarlane03}. These solutions accept the validity of the argument and argue that not all its premises are true. They focus on the following issue: how do we ascribe truth values to statements such as ``there will be a sea battle tomorrow''? These statements are called in the literature \emph{future contingents}, as they are about the future but do not have an absolute sense. In the first two solutions mentioned above, the principle of excluded future middle fails. In others, the principle of necessity of truth fails. Except for {\L}ukasiewicz's three-valued logic, the other solutions use branching time models. We refer to \cite{sep-future-contingents} for detailed comparison between these solutions.
\end{remark}

\section{Expressivity of $\mathsf{LTC}$}
\label{section: expressive power of ltc}

In this section, we show two results. Firstly, the update operator $[\phi]$ can be eliminated from any formula in $\Phi_{\LTC}$. So $\Phi_{\LTC}$ can be reduced to $\Phi_{\X\A}$. Secondly, when restricted to state formulas, $\Phi_{\X\A}$ are equally expressive with $\Phi_{\Box}$. Later we will see that the two results enable us to obtain a complete axiomatization for \LTC.

To eliminate $[\phi]$ from $[\phi] \psi$, the strategy is to `massage' $[\phi]$ into $\psi$, deeper and deeper, by applying some valid reduction principles, until it meets atomic propositions and dissolves. The only difficulty arises when $[\phi]$ meets the operator $\X$. To handle this, some preliminary treatment of $\phi$ is needed.

Recall that $\Phi_{\mathsf{PC}}$ is the set of purely propositional formulae.

\begin{lemma} \label{lemma: normal form}
For any $\phi$ in $\Phi_{\X}$, there are $\psi_1, \dots, \psi_n$ in $\Phi_{\mathsf{PC}}$ and $\chi_1, \dots, \chi_n$ in $\Phi_{\X}$ such that  $\phi$ is equivalent to $(\psi_1 \lor \X \chi_1) \land \dots \land (\psi_n \lor \X \chi_n)$.
\end{lemma}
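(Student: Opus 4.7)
The plan is to proceed by induction on the structure of $\phi \in \Phi_{\mathrm{X}}$, exploiting two semantic equivalences that $\mathrm{X}$ enjoys on serial trees. The first is that $\mathrm{X}$ distributes over both $\land$ and $\lor$: $\mathrm{X}(\theta_1 \lor \theta_2) \equiv \mathrm{X} \theta_1 \lor \mathrm{X} \theta_2$ and dually for $\land$. The second is that $\mathrm{X}$ is self-dual, i.e., $\neg \mathrm{X} \theta \equiv \mathrm{X} \neg \theta$; this relies on the fact that every state on a timeline has a successor, which is guaranteed by seriality. Combined with De Morgan, these turn $\mathrm{X}$ into a transparent label for the purposes of propositional normalization.

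For the base cases $\phi = p$ and $\phi = \top$, take $\phi \equiv \phi \lor \mathrm{X} \neg \top$, a single clause in the desired form whose $\mathrm{X}$-part is vacuously false. Similarly, in the case $\phi = \mathrm{X} \psi$, take $\phi \equiv \neg \top \lor \mathrm{X} \psi$. In the case $\phi = \psi \land \chi$, the induction hypothesis supplies normal forms for $\psi$ and $\chi$ individually, and their conjunction is obtained by juxtaposing the two families of clauses.

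The main case is $\phi = \neg \psi$. By the induction hypothesis, $\psi \equiv \bigwedge_{i=1}^n (\alpha_i \lor \mathrm{X} \beta_i)$, so De Morgan together with $\neg \mathrm{X} \beta_i \equiv \mathrm{X} \neg \beta_i$ gives $\neg \psi \equiv \bigvee_{i=1}^n (\neg \alpha_i \land \mathrm{X} \neg \beta_i)$. I then convert this DNF back into CNF by distribution: for each function $f : \{1, \dots, n\} \to \{0, 1\}$, form the clause
\[
C_f \;=\; \bigvee_{f(i)=0} \neg \alpha_i \;\lor\; \bigvee_{f(i)=1} \mathrm{X} \neg \beta_i,
\]
so that $\neg \psi \equiv \bigwedge_f C_f$. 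Within each $C_f$, combine the propositional literals into a single $\psi_f \in \Phi_{\mathsf{PC}}$ and collapse the $\mathrm{X}$-disjuncts into a single $\mathrm{X} \chi_f$ via $\mathrm{X} \theta_1 \lor \mathrm{X} \theta_2 \equiv \mathrm{X}(\theta_1 \lor \theta_2)$, defaulting to $\neg \top$ or $\mathrm{X} \neg \top$ as filler if one of the two piles happens to be empty.

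I do not expect a deep obstacle. The distribution step is combinatorially heavy (the number of clauses is exponential in $n$) but conceptually routine. The only substantive semantic fact to verify carefully is the self-duality $\neg \mathrm{X} \theta \equiv \mathrm{X} \neg \theta$, which is precisely where seriality of the tree is invoked.
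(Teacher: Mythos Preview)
Your proposal is correct and follows the same spirit as the paper's argument, which is essentially a one-line sketch: the paper merely notes that $\mathrm{X}$ distributes over $\land$ and $\lor$ and then says ``by making a reflection we can see that this lemma holds.'' You carry out exactly that reflection, and in doing so you make explicit the one ingredient the paper leaves implicit, namely the self-duality $\neg\mathrm{X}\theta \equiv \mathrm{X}\neg\theta$, which is what lets the induction go through the $\neg$ case; without it the paper's sketch is incomplete.
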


\begin{proof}
The claim follows easily by firstly transforming $\phi$ to a CNF, where all subformulae beginning with $\X$ are treated as literals, and then applying the valid equivalences $\lnot \X \phi \equiv \X \lnot \phi$ and $\X \phi \lor \X \psi \equiv \X (\phi \lor \psi)$.
\end{proof}

\begin{lemma} \label{lemma: update with a conjunction}
For all $\phi, \psi \in \Phi_{\X [\cdot]}$ and $\chi \in \Phi_{\LTC}$, $[\phi \land \psi] \chi \leftrightarrow [\phi][\psi] \chi$ is valid.
\end{lemma}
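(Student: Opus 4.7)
The plan is to reduce the biconditional to a model-identity statement. Set $w := \pi(i)$. I aim to show: $\phi \land \psi$ is achievable at $w$ in $\mathfrak{M}$ iff $\phi$ is achievable at $w$ in $\mathfrak{M}$ and $\psi$ is achievable at $w$ in $\mathfrak{M}^\phi_w$; and in the positive case $\mathfrak{M}^{\phi \land \psi}_w = (\mathfrak{M}^\phi_w)^\psi_w$. Once this is established, the outer semantic clause for $[\cdot]$ applied on each side of the biconditional collapses to the same assertion about $\chi$, and the equivalence follows almost mechanically.

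The core technical step is the characterisation of retained states. By \textbf{Definition \ref{definition: update of models}}, a future $x$ of $w$ survives $\mathfrak{M}^{\phi \land \psi}_w$ iff some timeline $\rho$ of $\mathfrak{M}$ through $w$ and $x$ satisfies $\mathfrak{M}, \rho, w \Vdash \phi \land \psi$. A future $x$ of $w$ survives $(\mathfrak{M}^\phi_w)^\psi_w$ iff $x$ survives $\mathfrak{M}^\phi_w$ and some timeline $\rho$ of $\mathfrak{M}^\phi_w$ through $w$ and $x$ satisfies $\mathfrak{M}^\phi_w, \rho, w \Vdash \psi$. Using \textbf{Theorem \ref{theorem: the dynamic operator restricts possible futures as we wish}} to identify the timelines of $\mathfrak{M}^\phi_w$ through $w$ with exactly those timelines $\rho$ of $\mathfrak{M}$ through $w$ with $\mathfrak{M}, \rho, w \Vdash \phi$, and \textbf{Lemma \ref{lemma: having nothing to do with other timelines}} to swap $\mathfrak{M}^\phi_w, \rho, w \Vdash \psi$ for $\mathfrak{M}, \rho, w \Vdash \psi$, I rewrite the second characterisation as: some timeline $\rho$ of $\mathfrak{M}$ through $w$ and $x$ satisfies $\mathfrak{M}, \rho, w \Vdash \phi \land \psi$. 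So the two retained sets coincide; since both updates are the restriction of $\mathfrak{M}$ to their retained set (with $R$ and $V$ restricted in the same way), the models are identical. Specialising the same witness argument yields the definedness equivalence.

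With this in hand, I conclude by case analysis. If $\phi \land \psi$ is achievable at $w$, then both updates are defined and equal, so $\pi$ is a timeline of one iff a timeline of the other; both sides of the biconditional either trivialise together (when $\pi$ no longer survives) or reduce to the single assertion $\mathfrak{M}^{\phi \land \psi}_w, \pi, i \Vdash \chi$. If $\phi \land \psi$ is not achievable at $w$, the LHS is trivially true, and by the claim at least one of $\mathfrak{M}^\phi_w$ and $(\mathfrak{M}^\phi_w)^\psi_w$ fails to be defined, trivialising the RHS. The main obstacle is the careful bookkeeping of the ``well given'' convention across nested updates, so that a trivialisation on one side is always matched by a trivialisation on the other; proving the model-identity claim up front is what makes this matching automatic, and concentrates the real work into the timeline translation performed by \textbf{Theorem \ref{theorem: the dynamic operator restricts possible futures as we wish}} and \textbf{Lemma \ref{lemma: having nothing to do with other timelines}}.
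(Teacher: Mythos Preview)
Your proposal is correct and follows essentially the same strategy as the paper's proof: establish the definedness equivalence, then the model identity $\mathfrak{M}^{\phi\land\psi}_w=(\mathfrak{M}^{\phi}_w)^{\psi}_w$, and conclude by unfolding the semantic clause for $[\cdot]$. The paper phrases the identity step as ``same timelines through $w$'' while you phrase it as ``same set of retained future states,'' and the paper invokes \textbf{Lemma~\ref{lemma: characterization of survival of paths}} and \textbf{Theorem~\ref{theorem: update works as wished}} where you invoke \textbf{Theorem~\ref{theorem: the dynamic operator restricts possible futures as we wish}} and \textbf{Lemma~\ref{lemma: having nothing to do with other timelines}}; these are equivalent packagings of the same facts.
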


\begin{proof}
Firstly, we show that $\M^{\phi \land \psi}_w$ is defined iff $(\M^{\phi}_w)^{\psi}_w$ is defined.

Assume that $\M^{\phi \land \psi}_w$ is defined, which implies $\M, \pi, w \Vdash \phi \land \psi$ for some timeline $\pi$ passing through $w$. Then $\M, \pi, w \Vdash \phi$ and $\M, \pi, w \Vdash \psi$. By {Lemma \ref{lemma: characterization of survival of paths}}, $\M^\phi_w$ is defined and $\pi$ is a timeline in $\M^\phi_w$. By {Lemma \ref{lemma: having nothing to do with other timelines}}, $\M^\phi_w, \pi, w \Vdash \psi$ and thus also $(\M^{\phi}_w)^{\psi}_w$ is defined.

Now, assume that $(\M^{\phi}_w)^{\psi}_w$ is defined, and thus $\M^\phi_w, \pi, w \Vdash \psi$ for some timeline $\pi$ in $\M^\phi_w$ passing through $w$. Moreover, by {Theorem \ref{theorem: update works as wished}}, $\M, \pi, w \Vdash [\phi] \A \phi$. As $(\M^\phi_w, \pi)$ is well-defined, $\M^\phi_w, \pi, w \Vdash \A \phi$, hence $\M^\phi_w, \pi, w \Vdash \phi$. Thus, $\M^\phi_w, \pi, w \Vdash \phi \land \psi$ and consequently, by {Lemma \ref{lemma: having nothing to do with other timelines}}, $\M, \pi, w \Vdash \phi \land \psi$. So, $\M^{\phi \land \psi}_w$ is defined, as wanted.

Secondly, we show $\M^{\phi \land \psi}_w = (\M^{\phi}_w)^{\psi}_w$. It suffices to show that $\M^{\phi \land \psi}_w$ and $(\M^{\phi}_w)^{\psi}_w$ have the same timelines passing through $w$. 
By Lemma \ref{lemma: characterization of survival of paths} and Lemma \ref{lemma: having nothing to do with other timelines}, the following holds:

\begin{enumerate}[leftmargin = 3em, rightmargin = 3em]
\item[] $\pi$ is a timeline in  $\M^{\phi \land \psi}_w$ through $w$
\item[$\Leftrightarrow$] $\M, \pi, w \Vdash \phi \land \psi$
\item[$\Leftrightarrow$] $\pi$ is a timeline in  $\M^{\phi}_w$ through $w$ and $\M, \pi, w \Vdash \psi$
\item[$\Leftrightarrow$] $\pi$ is a timeline in  $\M^{\phi}_w$ through $w$ and $\M^{\phi}_w, \pi, w \Vdash \psi$
\item[$\Leftrightarrow$] $\pi$ is a timeline in  $(\M^{\phi}_w)^{\psi}_w$ through $w$
\end{enumerate}

Consequently, 

\begin{enumerate}[leftmargin = 3em, rightmargin = 3em]
\item[] $\M, \pi, w \Vdash [\phi \land \psi] \chi$
\item[$\Leftrightarrow$] if $(\M^{\phi \land \psi}_w, \pi)$ is well-defined, then $\M^{\phi \land \psi}_w, \pi, w \Vdash \chi$
\item[$\Leftrightarrow$] if $\M^{\phi \land \psi}_w$ is defined and $\pi$ is a timeline in $\M^{\phi \land \psi}_w$, then $\M^{\phi \land \psi}_w, \pi, w \Vdash \chi$
\item[$\Leftrightarrow$] if $(\M^{\phi}_w)^{\psi}_w$ is defined and $\pi$ is a timeline in $(\M^{\phi}_w)^{\psi}_w$, then $(\M^{\phi}_w)^{\psi}_w, \pi, w \Vdash \chi$
\item[$\Leftrightarrow$] if $\M^{\phi}_w$ is defined, $\pi$ is a timeline in $\M^{\phi}_w$, $(\M^{\phi}_w)^{\psi}_w$ is defined and $\pi$ is a timeline in $(\M^{\phi}_w)^{\psi}_w$, then $(\M^{\phi}_w)^{\psi}_w, \pi, w \Vdash \chi$
\item[$\Leftrightarrow$] if $(\M^{\phi}_w, \pi)$ is well-defined and $((\M^{\phi}_w)^{\psi}_w, \pi)$ is well-defined, then $(\M^{\phi}_w)^{\psi}_w, \pi, w \Vdash \chi$
\item[$\Leftrightarrow$] if $(\M^{\phi}_w, \pi)$ is well-defined, then if $((\M^{\phi}_w)^{\psi}_w, \pi)$ is well-defined, then $(\M^{\phi}_w)^{\psi}_w, \pi, w \Vdash \chi$
\item[$\Leftrightarrow$] if $(\M^{\phi}_w, \pi)$ is well-defined, then $\M^{\phi}_w, \pi, w \Vdash [\psi] \chi$
\item[$\Leftrightarrow$] $\M, \pi, w \Vdash [\phi][\psi] \chi$
\end{enumerate}
\end{proof}

In the sequel, $[\phi \land \psi] \chi \leftrightarrow [\phi][\psi] \chi$ is called the axiom $\mathbf{Ax}_{[\land]}$. 

\begin{lemma} \label{lemma: update with X Phi}
Let $\phi \in \Phi_{\X [\cdot]}$ and $\M, \pi, i \Vdash \X \phi$. Then the generated submodels of $\M^{\X \phi}_{\pi(i)}$ and $\M^{\phi}_{\pi(i+1)}$ at $\pi (i+1)$ coincide.
\end{lemma}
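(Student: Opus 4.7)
The plan is as follows. Set $w = \pi(i)$ and $v = \pi(i+1)$. Since $\mathfrak{M}, \pi, i \Vdash \mathrm{X}\phi$ yields $\mathfrak{M}, \pi, i+1 \Vdash \phi$, both updates are well defined: $\mathrm{X}\phi$ is achievable at $w$ (witnessed by $\pi$) and $\phi$ is achievable at $v$ (witnessed by the same $\pi$). To establish identity of the two generated submodels at $v$, it suffices to show that they have the same carrier set, since each inherits its transition relation and valuation by restricting the ambient $R$ and $V$ to this set, and each takes $v$ as root.

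Next I would give an explicit description of the two carriers. Using the fact, noted just after \textbf{Definition \ref{definition: update of models}}, that the removed sets $X^{\mathrm{X}\phi}_w$ and $X^{\phi}_v$ are both closed under $R$, the states reachable from $v$ in $\mathfrak{M}^{\mathrm{X}\phi}_w$ are exactly $(\{v\} \cup \mathrm{Fut}(v)) \setminus X^{\mathrm{X}\phi}_w$, where $\mathrm{Fut}(v)$ is the set of future states of $v$ in $\mathfrak{M}$; and analogously for $\mathfrak{M}^{\phi}_v$ with $X^{\phi}_v$ replacing $X^{\mathrm{X}\phi}_w$. Forward-closure is what makes this clean: a branch from $v$ in the trimmed tree cannot drop a deleted node and then pick up kept successors, so reachability collapses to membership.

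The heart of the argument is then to show that $X^{\mathrm{X}\phi}_w$ and $X^{\phi}_v$ coincide on $\{v\} \cup \mathrm{Fut}(v)$. For $v$ itself: $v$ is not a future state of itself, so $v \notin X^{\phi}_v$; and $\pi$ is a timeline through $w$ and $v$ witnessing $\mathrm{X}\phi$ at $w$, so $v \notin X^{\mathrm{X}\phi}_w$. For $x \in \mathrm{Fut}(v)$: any timeline passing through $v$ and $x$ passes through $w$ as well (uniqueness of histories), and conversely any timeline through $w$ and $x$ must pass through $v$ (since $v$ lies on the unique history of $x$); on any such timeline, $\mathrm{X}\phi$ at the position of $w$ is equivalent to $\phi$ at the position of $v$. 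Hence $x$ is removed by the first update iff by the second.

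The main obstacle is just the careful bookkeeping in the second paragraph: reducing ``reachable from $v$ in the trimmed tree'' to ``future of $v$ minus deleted'' relies on the forward-closure remark; without it one would have to worry about blocked branches. Everything else is a one-step time shift. Once the carriers are shown equal, identity of the generated submodels follows immediately, since both then restrict the same $R$ and $V$ to the same set and both carry root $v$.
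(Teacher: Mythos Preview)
The paper states this lemma without proof, so there is nothing to compare against; your proposal is correct and supplies exactly the argument the paper omits. The two ingredients you isolate are the right ones: forward-closure of the deleted sets $X^{\mathrm{X}\phi}_w$ and $X^{\phi}_v$ reduces the carrier of each generated submodel at $v$ to $(\{v\}\cup\mathrm{Fut}(v))$ minus the respective deleted set, and the one-step time shift together with uniqueness of histories in a tree gives a bijection between timelines witnessing $\mathrm{X}\phi$ at $w$ through a given $x\in\mathrm{Fut}(v)$ and timelines witnessing $\phi$ at $v$ through $x$. One small point you might make explicit when writing it up: for $x\in\mathrm{Fut}(v)$ and $x\notin X^{\mathrm{X}\phi}_w$, the \emph{entire} $R$-sequence from $v$ to $x$ avoids $X^{\mathrm{X}\phi}_w$ (any intermediate node in $X^{\mathrm{X}\phi}_w$ would force $x$ in by forward-closure), which is what guarantees $x$ is actually $R'$-reachable from $v$ rather than merely surviving as an isolated node; you allude to this but it is the step a careful reader will want spelled out.
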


\begin{proof}
As $\M, \pi, i \Vdash \X \phi$, $\M^{\mathrm{X} \phi}_{\pi(i)}$ is defined and $\M, \pi, i+1 \Vdash \phi$. Then $\M^\phi_{\pi(i+1)}$ is defined. By {Corollary \ref{corollary: the dynamic operator restricts possible futures as we wish}}, for every timeline $\tau$ passing through $\pi(i+1)$, we have the following equivalences: $\tau$ is a timeline in $\M^{X \phi}_{\pi(i)}$ $\Leftrightarrow$ $\M, \tau, i \Vdash \X \phi$ $\Leftrightarrow$ $\M, \tau, i+1 \Vdash \phi$ $\Leftrightarrow$ $\tau$ is a timeline in $\M^{\phi}_{\pi(i+1)}$. From this the claim follows easily. 
\end{proof}

\begin{lemma} \label{lemma: reduction axioms}
The following equivalences are valid, where $\phi \in \Phi_\mathsf{PC}$ and $\X \psi \in \Phi_{\X}$:
\begin{displaymath}
\begin{array}{rl}
\mathbf{Ax}_{[\cdot]p} : & [\phi \lor \X \psi] p \leftrightarrow ((\phi \lor \X \psi) \rightarrow p) \\
\mathbf{Ax}_{[\cdot]\top} : & [\phi \lor \X \psi] \top \leftrightarrow \top \\
\mathbf{Ax}_{[\cdot]\neg} : & [\phi \lor \X \psi] \neg \chi \leftrightarrow ((\phi \lor \X \psi) \rightarrow \neg [\phi \lor \X \psi] \chi) \\
\mathbf{Ax}_{[\cdot]\land} : & [\phi \lor \X \psi] (\chi \land \xi) \leftrightarrow ([\phi \lor \X \psi] \chi \land [\phi \lor \X \psi] \xi) \\
\mathbf{Ax}_{[\cdot]\X} : & [\phi \lor \X \psi] \X \chi \leftrightarrow ((\phi \rightarrow \X \chi) \land (\neg \phi \rightarrow \X [\psi] \chi)) \\
\mathbf{Ax}_{[\cdot]\A} : & [\phi \lor \X \psi] \A \chi \leftrightarrow ((\phi \lor \X \psi) \rightarrow \A [\phi \lor \X \psi] \chi) \\
\end{array}
\end{displaymath}
\end{lemma}

\begin{proof}
The first 4 equivalences are straightforward.

$\mathbf{Ax}_{[\cdot]\X}$. Assume $\M, \pi, i \not \Vdash [\phi \lor \X \psi] \X \chi$. Then $(\M^{\phi \lor \X \psi}_{\pi(i)}, \pi)$ is well-defined and $\M^{\phi \lor \X \psi}_{\pi(i)}, \pi, i \not \Vdash \X \chi$. By {Lemma \ref{lemma: characterization of survival of paths}}, $\M, \pi, i \Vdash \phi \lor \X \psi$. Assume $\M, \pi, i \Vdash \phi$. As $\phi$ is in $\Phi_{\mathsf{PC}}$, $\M^{\phi \lor \X \psi}_{\pi(i)} = \M$. Then $\M, \pi, i \not \Vdash \X \chi$. Then $\M, \pi, i \not \Vdash \phi \rightarrow \X \chi$. Assume $\M, \pi, i \Vdash \neg \phi$. As $\phi$ is in $\Phi_{\mathsf{PC}}$, $\M^{\phi \lor \X \psi}_{\pi(i)} = \M^{\X \psi}_{\pi(i)}$. Then $\M^{\X \psi}_{\pi(i)}, \pi, i \not \Vdash \X \chi$. Then $\M^{\X \psi}_{\pi(i)}, \pi, i+1 \not \Vdash \chi$. By {Lemma \ref{lemma: update with X Phi}}, $\M^{\psi}_{\pi(i+1)}, \pi, i+1 \not \Vdash \chi$. Then $\M, \pi, i+1 \not \Vdash [\psi] \chi$. Hence $\M, \pi, i \not \Vdash \X [\psi] \chi$. Therefore, $\M, \pi, i \not \Vdash \neg \phi \rightarrow \X [\psi] \chi$.

Assume $\M, \pi, i \not \Vdash \phi \rightarrow \X \chi$. Then $\M, \pi, i \Vdash \phi$ and $\M, \pi, i \not \Vdash \X \chi$. As $\phi$ is in $\Phi_{\mathsf{PC}}$, $\M^{\phi \lor \X \psi}_{\pi(i)} = \M$. Then $\M^{\phi \lor \X \psi}_{\pi(i)}, \pi, i \not \Vdash \X \chi$. Then $\M, \pi, i \not \Vdash [\phi \lor \X \psi] \X \chi$. Assume $\M, \pi, i \not \Vdash \neg \phi \rightarrow \X [\psi] \chi$. Then $\M, \pi, i \Vdash \neg \phi$ and $\M, \pi, i \not \Vdash \X [\psi] \chi$. As $\phi$ is in $\Phi_{\mathsf{PC}}$, $\M^{\phi \lor \X \psi}_{\pi(i)} = \M^{\X \psi}_{\pi(i)}$. Then $\M, \pi, i+1 \not \Vdash [\psi] \chi$. Then $\M^{\psi}_{\pi(i+1)}, \pi, i+1 \not \Vdash \chi$. By {Lemma \ref{lemma: update with X Phi}}, $\M^{\X \psi}_{\pi(i)}, \pi, i+1 \not \Vdash \chi$. Then $\M^{\X \psi}_{\pi(i)}, \pi, i \not \Vdash \X \chi$. Hence $\M^{\phi \lor \X \psi}_{\pi(i)}, \pi, i \not \Vdash \X \chi$. Therefore, $\M, \pi, i \not \Vdash [\phi \lor \X \psi] \X \chi$.

$\mathbf{Ax}_{[\cdot]\A}$. Let $\alpha = \phi \lor \X \psi$. Assume $\M, \pi, i \not \Vdash \alpha \rightarrow \A [\alpha] \chi$. Then $\M, \pi, i \Vdash \alpha$ but $\M, \pi, i \not \Vdash \A [\alpha] \chi$. By {Lemma \ref{lemma: characterization of survival of paths}}, $\pi$ is in $\M^\alpha_{\pi(i)}$. Then there is a timeline $\tau$ passing through $\pi(i)$ such that $\M, \tau, i \not \Vdash [\alpha] \chi$. Then $\M^\alpha_{\pi(i)}, \tau, i \not \Vdash \chi$. Then $\M^\alpha_{\pi(i)}, \tau, i \not \Vdash \A \chi$. Then $\M^\alpha_{\pi(i)}, \pi, i \not \Vdash \A \chi$. Then $\M, \pi, i \not \Vdash [\alpha] \A \chi$.

Assume $\M, \pi, i \not \Vdash [\alpha] \A \chi$. Then $(\M^\alpha_{\pi(i)}, \pi)$ is well-defined and $\M^\alpha_{\pi(i)}, \pi, i \not \Vdash \A \chi$. By {Lemma \ref{lemma: characterization of survival of paths}}, $\M, \pi, i \Vdash \alpha$. Then there is a timeline $\tau$ passing through $\pi(i)$ in $\M^\alpha_{\pi(i)}$ such that  $\M^\alpha_{\pi(i)}, \tau, i \not \Vdash \chi$. Then $\M, \tau, i \not \Vdash [\alpha] \chi$. Then $\M, \pi, i \not \Vdash \A [\alpha] \chi$. Hence $\M, \pi, i \not \Vdash \alpha \rightarrow \A [\alpha] \chi$.
\end{proof}

Recall that $\Phi{\X \A}$ denotes the language generated from $\Phi_0$ under $\X$ and $\A$.

\begin{theorem} 
\label{thm:translation} 
There is an effective translation $\mathbf{t}: \Phi_{\LTC} \to \Phi_{\X\A}$ preserving formulae up to equivalence, i.e. such that $\phi \equiv \mathbf{t}(\phi)$ for every $\phi \in \Phi_{\LTC}$.    
\end{theorem}

\begin{proof}
Define a language $\Phi_{\X \A [\lor]}$ as follows, where $\eta \in \Phi_{\mathsf{PC}}$ and $\chi \in \Phi_{\X}$:
\begin{displaymath}
\begin{array}{l}
\phi ::= p \,\,|\,\, \top \,\,|\,\, \neg \phi \,\,|\,\, (\phi \land \phi) \,\,|\,\, \X \phi \,\,|\,\, \A \phi \,\,|\,\, [\eta \lor \X \chi] \phi \\
\end{array}
\end{displaymath}

There is a translation $f: \Phi_{\LTC} \to \Phi_{\X \A [\lor]}$ preserving formulae up to equivalence. The reason is as follows. Take a formula $[\phi] \psi$ in $\Phi_{\LTC}$. By Lemma \ref{lemma: reduction for simple consequent}, $\phi$ is equivalent to some $\phi'$ in $\Phi_{\X}$. Then $[\phi] \psi$ is equivalent to $[\phi'] \psi$. By Lemma \ref{lemma: normal form}, $\phi'$ is equivalent to a conjunction $\phi'_1 \land \dots \land \phi'_n$ for some $\phi'_i = \eta_i \lor \X \chi_i$, where $\eta_i \in \Phi_{\mathsf{PC}}$ and $\chi_i \in \Phi_{\X}$, for $i =1,...,n$. Then $[\phi'] \psi$ is equivalent to $[\phi'_1\land \dots \land \phi'_n] \psi$. By Lemma \ref{lemma: update with a conjunction}, $[\phi'_1\land \dots \land \phi'_n] \psi$ is equivalent to $[\phi'_1] \dots [\phi'_n] \psi$.

Define a translation $g: \Phi_{\X \A [\lor]} \to \Phi_{\X \A}$ in the following way:

\begin{enumerate}
\item $p^g = p$
\item $\top^g = \top$
\item $(\neg \psi)^g = \neg \psi^g$
\item $(\psi \land \chi)^g = \psi^g \land \chi^g$
\item $(\X \psi)^g = \X \psi^g$
\item $(\A \psi)^g = \A \psi^g$
\item 
\begin{enumerate}
\item $([\eta \lor \X \chi] p)^g = ((\eta \lor \X \chi) \rightarrow p)^g$
\item $([\eta \lor \X \chi] \top)^g = \top^g$
\item $([\eta \lor \X \chi] \neg \psi)^g = ((\eta \lor \X \chi) \rightarrow \neg [\eta \lor \X \chi] \psi)^g$
\item $([\eta \lor \X \chi] (\psi \land \xi)^g = ([\eta \lor \X \chi] \psi \land [\eta \lor \X \chi] \xi)^g$
\item $([\eta \lor \X \chi] \X \psi)^g = ((\eta \rightarrow \X \psi) \lor (\neg \eta \rightarrow \X [\chi] \psi))^g$
\item $([\eta \lor \X \chi] \A \psi)^g = ((\eta \lor \X \chi) \rightarrow \A [\eta \lor \X \chi] \psi)^g$
\item $([\eta \lor \X \chi] [\alpha \lor \X \beta] \psi)^g = ([\eta \lor \X \chi] ([\alpha \lor \X \beta] \psi)^g)^g$
\end{enumerate}
\end{enumerate}

Now we show by three layers of induction that for any $\phi \in \Phi_{\X \A [\lor]}$, $\phi^g \in \Phi_{\X \A}$ and $\phi^g \equiv \phi$. We consider only the former result; with Lemma \ref{lemma: reduction axioms}, the argument for the latter result is similar. We put the first layer of induction on $\phi$. The only non-trivial case is $\phi = ([\eta \lor \X \chi] \gamma$. Then we put the second layer of induction on $\gamma$. We can easily go through all the subcases except the one $\gamma = [\alpha \lor \X \beta] \psi$. Assume $\gamma = [\alpha \lor \X \beta] \psi$. By the inductive hypothesis for the second layer of induction, $([\alpha \lor \X \beta] \psi)^g \in \Phi_{\X \A}$. Then by the third layer of induction on $([\alpha \lor \X \beta] \psi)^g$, we can get $([\eta \lor \X \chi] \psi)^g \in \Phi_{\X \A}$.

Now, the translation $\mathbf{t}: \Phi_{\LTC} \to \Phi_{\X \A}$ defined as $g \circ f$ satisfies the claim. 
\end{proof}

\begin{example} 
\label{example:translation}
This example demonstrates how the translation $\mathbf{t}$ is computed in the non-trivial case of $\mathbf{t} ([\phi] \psi)$. It also  illustrates the exponential blow-up that the translation causes to the length of the input formula in this case.

\begin{itemize}
\item[] $[(p \lor \X q) \land (r \lor \X s)] \X t$

\item[$\equiv$] $[p \lor \X q] [r \lor \X s] \X t$

\item[$\equiv$] $[p \lor \X q] \big( (r \rightarrow \X t) \land (\neg r \rightarrow \X [s] t) \big)$

\item[$\equiv$] $[p \lor \X q] \Big( (r \rightarrow \X t) \land \big( \neg r \rightarrow \X (s \rightarrow t) \big) \Big)$

\item[$\equiv$] $[p \lor \X q] (r \rightarrow \X t) \land [p \lor \X q] \big( \neg r \rightarrow \X (s \rightarrow t) \big)$

\item[$\equiv$] $([p \lor \X q] r \rightarrow [p \lor \X q] \X t) \land \big( [p \lor \X q] \neg r \rightarrow [p \lor \X q] \X (s \rightarrow t) \big)$

\item[$\equiv$] $\Big( \big( (p \lor \X q) \rightarrow r \big) \rightarrow [p \lor \X q] \X t \Big) \land \Big( \big( (p \lor \X q) \rightarrow \neg r \big) \rightarrow [p \lor \X q] \X (s \rightarrow t) \Big)$

\item[$\equiv$] $\bigg( \big( (p \lor \X q) \rightarrow r \big) \rightarrow \big( (p \rightarrow \X t) \land (\neg p \rightarrow \X [q] t) \big) \bigg) \land \\
\bigg( \big( (p \lor \X q) \rightarrow \neg r \big) \rightarrow \Big( \big(p \rightarrow \X (s \rightarrow t) \big) \land \big( \neg p \rightarrow \X [q] (s \rightarrow t) \big) \Big) \bigg)$

\item[$\equiv$] $\Bigg( \big( (p \lor \X q) \rightarrow r \big) \rightarrow \Big( (p \rightarrow \X t) \land (\neg p \rightarrow \X (q \rightarrow t) \big) \Big) \Bigg) \land \\
\Bigg( \big( (p \lor \X q) \rightarrow \neg r \big) \rightarrow \bigg( \big(p \rightarrow \X (s \rightarrow t) \big) \land \Big( \neg p \rightarrow \X \big( q \rightarrow (s \rightarrow t) \big) \Big) \bigg) \Bigg)$
\end{itemize}
\end{example}

\begin{lemma} \label{lem: for reduction3}
Let $\eta$ be a state formula and $\chi$ a formula in $\Phi_{\X \A}$. Then the following equivalences are valid:

\begin{enumerate}
\item $\A(\eta \lor \chi) \leftrightarrow (\eta \lor \A\chi)$
\item $\A\X\X\chi \leftrightarrow \A \X \A \X \chi$
\end{enumerate}
\end{lemma}

\begin{theorem}
\label{thm:reduction3}
There is an effective translation $\ax$ from the class of state formulas of $\Phi_{\X \A}$ to $\Phi_{\Box}$ preserving formulae up to equivalence.
\end{theorem}

\begin{proof}
(Sketch.) Induction on the combined depth $\phi^{cd}$ of $\phi$. When $\phi^{cd} = 0$ then $\phi$ is already in $\Phi_{\Box}$. Suppose $\phi^{cd} = k+1$ and the claim holds for all state formulae $\theta$ such that $\theta^{cd} \leq k$.

Now, the formula $\phi$ is a boolean combination of propositional formulae and formulae of the type $\A \psi$. In the latter, $\psi$ is equivalent to a CNF, with disjunctions of the type $\delta = \alpha \lor \E \theta \lor \A \theta_{1} \lor \ldots \lor \A \theta_{k} \lor \X\chi$,  where $\alpha \in \Phi_{\mathsf{PC}}$. Since each of $\alpha, \E \theta, \A \theta_{1},...,  \A \theta_{k}$ is a state formula of combined depth $\leq k$, by using the inductive hypothesis we can assume that each of these is already replaced by an  equivalent formula from $\Phi_{\Box}$. Then $\A \psi$ is equivalent to a conjunction of state formulae of the type $\A \delta = \A(\alpha \lor \E \theta \lor \A \theta_{1} \lor \ldots \lor \A \theta_{k} \lor \X\chi)$, which, by Lemma \ref{lem: for reduction3}, is equivalent to $\alpha \lor \E \theta \lor \A \theta_{1} \lor \ldots \lor \A \theta_{k} \lor \A\X\chi$. Note that $\chi^{cd} \leq k-1$. After transforming, likewise, $\chi$ to a CNF, distributing $\A$ over the conjunctions in it and pulling out the state subformulae from these disjunctions outside $\A$, we reduce $\A\X\chi$ above to an equivalent boolean combination of formulae from $\Phi_{\Box}$ and formulae of the type $\A\X\X\gamma$. The latter is equivalent to $\A\X\A\X\gamma$ by Lemma \ref{lem: for reduction3}. Now, since $\gamma^{cd} \leq k-2$ then $(\A\X\gamma)^{cd} \leq k$, we can use the inductive hypothesis to replace $\A\X\gamma$ with an equivalent formula from $\Phi_{\Box}$. By doing that for all such subformulae $\A\X\X\gamma$ we obtain an equivalent from $\Phi_{\Box}$ for $\A\X\chi$, and therefore can define such an equivalent $\ax(\A \delta)$ for $\A \delta$. Finally, we define $\ax(\phi)$ to be the conjunction of all these $\ax(\A \delta)$.
\end{proof}

\section{Complete axiomatization of $\LTC$}
\label{section: complete axiomatization of ltc}
\label{section:TLRaxiomatization}

Here we present a sound and complete axiomatic system $\Axsys_{\LTC}$ for the logic $\LTC$, consisting of the following groups of axioms schemes and inference rules:

\medskip
I. Any complete system of axioms for \textsf{PC}.

\medskip
II. The usual axiom schemes for $\X$:

\smallskip
\textbf{Ax$_{\mathsf{K(\X)}}$}: $\X(\phi \to \psi) \to (\X\phi \to \X\psi)$

\smallskip
\textbf{Ax$_{\mathsf{D(\X)}}$}: $\lnot \X \bot$

\smallskip
\textbf{Ax$_{\mathsf{Fun(\X)}}$}: $\lnot \X \phi \to \X \lnot \phi$

\medskip
II. The $\mathsf{S5}$ axioms for $\A$, plus the following:

\smallskip
\textbf{Ax$_{\mathsf{St(\A)}}$}: $\phi \to \A \phi$, for any state formula $\phi$\footnote{It suffices to state that axiom for atomic propositions only, and the rest would be derivable, but we prefer to streamline the deductive system.}.

\smallskip
\textbf{Ax$_{\mathsf{\A\X\X}}$}: $\A\X\X \phi \to \A\X \A\X \phi$
  
\medskip
IV. Axiom schemes for $[\cdot]$: the valid schemes $\mathbf{Ax}_{[\land]}$ from Lemma \ref{lemma: update with a conjunction} and \textbf{Ax$_{[\cdot]p}$},
\textbf{Ax$_{[\cdot]\top}$},
\textbf{Ax$_{[\cdot]\neg}$},
\textbf{Ax$_{[\cdot]\land}$},
\textbf{Ax$_{[\cdot]\X}$},
\textbf{Ax$_{[\cdot]\A}$} from Lemma \ref{lemma: reduction axioms}.

\medskip
V. Inference rules: \textsf{Modus Ponens} ($\mathsf{MP}$) and the necessitation rules:
\[
\mathsf{Nec}_{\A}:\ \frac{\vdash \phi}{\vdash \A \phi}, \ \ \ \ \ \
\mathsf{Nec}_{\X}:\ \frac{\vdash \phi}{\vdash \X \phi}, \ \ \ \ \ \ 
\mathsf{Nec}_{[\cdot]}:\ \frac{\vdash \phi}{\vdash [\psi] \phi}, \ \ \ \ \ \ 
\mathsf{Equiv}_{[\cdot]}:\ \frac{\vdash \phi \ifff \psi}{\vdash [\phi] \chi \ifff [\psi] \chi}
\]

\medskip 
The subsystem $\Axsys_{\A\X}$ consists of axioms I, II, III, and the rules in V.

\medskip 
Recall that $\Box \phi$ denotes $\A\X \phi$ and $\Phi_{\Box}$ is the respective fragment of $\Phi_{\X \A}$ built from $\Phi_{0}$ only by using the propositional connectives and $\Box$. We define the subsystem $\Axsys_{\Box}$, consisting of the $\mathsf{KD}$ axioms for $\Box$:

\smallskip
\textbf{Ax$_{\mathsf{K(\Box)}}$}: $\Box(\phi \to \psi) \to (\Box\phi \to \Box\psi)$,

\smallskip
\textbf{Ax$_{\mathsf{D(\Box)}}$}: $\lnot \Box \bot$,

plus the rules $\mathsf{MP}$ and $\mathsf{Nec}_{\Box}: \ \ \frac{\vdash \phi}{\vdash \Box \phi}$.

\smallskip
Note that:

\begin{enumerate}
\item The axioms and rules of $\Axsys_{\Box}$ are derivable in  $\Axsys_{\A\X}$.
\item $\Axsys_{\Box}$, being canonical, is sound and complete for the validities in $\Phi_{\Box}$.
\end{enumerate}

Now, we are going to establish a series of equivalent reductions of formulae of \LTC, all derivable in $\Axsys_{\LTC}$, which will enable us to eventually reduce the proof of completeness of $\Axsys_{\LTC}$ to the completeness of the much simpler logic $\Axsys_{\Box}$. We begin with some useful derivations, used further in the proofs.

\begin{lemma}
\label{lem:two derivations}
The following are derivable in $\Axsys_{\LTC}$:

\begin{enumerate}
\item $\vdash_\mathsf{LTC} [\bot] \phi \leftrightarrow \top$
\item $\vdash_\mathsf{LTC} [\phi] \psi \leftrightarrow (\phi \rightarrow \psi)$, where $\phi$ is in $\Phi_{\mathsf{PC}}$
\end{enumerate}
\end{lemma}

\begin{proof}
(i) By induction on $\phi$. The only non-trivial case is $\phi = \X \psi$. Since $\vdash_\mathsf{LTC} \bot \leftrightarrow (\bot \lor \X \bot)$, we obtain $\vdash_\mathsf{LTC} [\bot] \X \psi \leftrightarrow [\bot \lor \X \bot] \X \psi$ by $\mathsf{Equiv}_{[\cdot]}$. Then, using $\mathbf{Ax}_{[\cdot]\X}$ we derive $\vdash_\mathsf{LTC} [\bot \lor \X \bot] \X \psi \leftrightarrow ((\bot \rightarrow \X \psi) \land (\neg \bot \rightarrow \X [\bot] \psi))$. By the inductive hypothesis, $\vdash_\mathsf{LTC} [\bot] \psi \leftrightarrow \top$. Then $\vdash_\mathsf{LTC} [\bot \lor \X \bot] \X \psi \leftrightarrow ((\bot \rightarrow \X \psi) \land (\neg \bot \rightarrow \X \top))$. Therefore, $\vdash_\mathsf{LTC} [\bot \lor \X \bot] \X \psi \leftrightarrow ((\bot \rightarrow \X \psi) \land (\neg \bot \rightarrow \top))$. Hence,  $\vdash_\mathsf{LTC} [\bot \lor \X \bot] \X \psi \leftrightarrow \top$. Therefore, $\vdash_\mathsf{LTC} [\bot] \X \psi \leftrightarrow \top$.

\medskip

(ii) By induction on $\psi$. The only non-trivial cases are $\psi = \X \chi$ and $\psi = \mathbf{A} \chi$.

Case $\psi = \X \chi$. As $\vdash_\mathsf{LTC} \phi \leftrightarrow (\phi \lor \X \bot)$, we derive $\vdash_\mathsf{LTC} [\phi] \X \chi \leftrightarrow [\phi \lor \X \bot] \X \chi$ by $\mathsf{Equiv}_{[\cdot]}$. Then, $\vdash_\mathsf{LTC} [\phi \lor \X \bot] \X \chi\leftrightarrow ((\phi \rightarrow \X \chi) \land (\neg \phi \rightarrow \X [\bot] \chi))$, by $\mathbf{Ax}_{[\cdot]\X}$. By claim (i), $\vdash_\mathsf{LTC} [\bot] \phi \leftrightarrow \top$. Hence, $\vdash_\mathsf{LTC} [\phi \lor \X \bot] \X \chi\leftrightarrow ((\phi \rightarrow \X \chi) \land (\neg \phi \rightarrow \X \top))$. Then $\vdash_\mathsf{LTC} [\phi \lor \X \bot] \X \chi\leftrightarrow ((\phi \rightarrow \X \chi) \land (\neg \phi \rightarrow \top))$, so $\vdash_\mathsf{LTC} [\phi \lor \X \bot] \X \chi\leftrightarrow (\phi \rightarrow \X \chi)$.

Case $\psi = \mathbf{A} \chi$. By $\mathbf{Ax}_{[\cdot] \A}$ we have $\vdash_\mathsf{LTC} [\phi] \A \chi \leftrightarrow (\phi \rightarrow \A [\phi] \chi)$. By the inductive hypothesis, $\vdash_\mathsf{LTC} [\phi] \chi \leftrightarrow (\phi \rightarrow \chi)$. Then $\vdash_\mathsf{LTC} [\phi] \A \chi \leftrightarrow (\phi \rightarrow \A (\phi \rightarrow \chi))$. Note that $\phi \in \Phi_{\mathsf{PC}}$. Then $\vdash_\mathsf{LTC} \A (\phi \rightarrow \chi) \leftrightarrow (\phi \rightarrow \A \chi)$, hence $\vdash_\mathsf{LTC} [\phi] \A \chi \leftrightarrow (\phi \rightarrow \A \chi)$.
\end{proof}

\begin{lemma}
\label{lem:reduction1}
For every \LTC-formula $\phi$, $\vdash_{\Axsys_{\LTC}} \mathbf{t}(\phi) \leftrightarrow \phi$.
\end{lemma}

\begin{proof}
By induction on $\phi$, by applying the axioms for $[\cdot]$, the derivations in Lemma \ref{lem:two derivations}, and the inference rules of $\Axsys_{\LTC}$, one can formalise the proof of Theorem \ref{thm:translation} in $\Axsys_{\LTC}$. The details are routine and we leave them out.
\end{proof}

\begin{lemma}
\label{lem:reduction2}
For every \LTC-formula $\phi$, if $\phi$ is $\Axsys_{\LTC}$-consistent then $\mathbf{t}(\phi)$ is $\Axsys_{\A\X}$-consistent.
\end{lemma}

\begin{proof}
Equivalently (since $\mathbf{t}(\lnot \phi) =  \lnot \mathbf{t}(\phi)$), we have to prove that for every \LTC-formula $\phi$, if $\vdash_{\Axsys_{\A\X}} \mathbf{t}(\phi)$ then $\vdash_{\Axsys_{\LTC}} \phi$. Indeed, take a derivation $\vdash_{\Axsys_{\A\X}} \mathbf{t}(\phi)$. It is also a derivation in $\Axsys_{\LTC}$. Append to it the derivation $\vdash_{\Axsys_{\LTC}} \mathbf{t}(\phi) \leftrightarrow \phi$ from Lemma \ref{lem:reduction1}, to obtain $\vdash_{\Axsys_{\LTC}} \mathbf{t}(\phi) \rightarrow \phi$. Then, by $\mathsf{MP}$ we obtain $\vdash_{\Axsys_{\LTC}} \phi$.
\end{proof}

\begin{lemma} \label{lem: ax and box}
Every state formula $\psi \in \Phi_{\X \A}$ is provably equivalent in $\Axsys_{\A\X}$ to $\ax(\psi) \in \Phi_{\Box}$.
\end{lemma}

\begin{proof} 
The equivalences used in defining the function $\ax$ in the proof for Theorem \ref{thm:reduction3},  including those in Lemma \ref{lem: for reduction3}, are all derivable in $\Axsys_{\A\X}$. 
\end{proof}

\begin{lemma}
\label{lem:reduction4}
For every formula $\psi \in \Phi_{\X \A}$ there is an effectively computable formula $\mathbf{v} (\psi) \in \Phi_{\Box}$ such that $\psi$ and $\mathbf{v} (\psi)$ are equally satisfiable and equally consistent in $\Axsys_{\A\X}$.
\end{lemma}

\begin{proof}
(Sketch.) Here `consistency' will mean `consistency in $\Axsys_{\A\X}$'. We will prove the claim for consistency; the proof  for satisfiability is fully analogous.

The definition of $\mathbf{v} (\psi)$ extends that of $\ax(\phi)$ for state formulae $\phi$ in Lemma \ref{lem: ax and box} as follows. The formula $\psi$ is provably in $\Axsys_{\A\X}$ equivalent to some, hereafter fixed, formula $\psi'$ in DNF, where every disjunct is of the type $\delta = \theta \land \X\chi$, for some state formula $\theta$ (note that $\chi$ may also be $\top$). Then, consistency of $\psi$ is equivalent (provably in $\Axsys_{\A\X}$) to consistency of some of these disjuncts. Note that $\theta \land \X\chi$ is consistent iff $\theta \land \E\X\chi$ is so. Indeed, $\vdash_{\Axsys_{\A\X}}  \theta  \to \lnot \X\chi$ iff $\vdash_{\Axsys_{\A\X}} \A(\theta  \to \lnot \X \chi)$ iff $\vdash_{\Axsys_{\A\X}} \theta  \to \A \lnot \X \chi$. Lastly, $\theta \land \E\X\chi$ is consistent iff $\ax(\theta) \land \ax(\E\X\chi)$ is so. Now, if $\psi$ is consistent we define $\mathbf{v} (\psi)$ to be $\ax(\theta) \land \ax(\E\X\chi)$ for the first disjunct $\delta$ in $\psi'$ that is consistent, else we define $\mathbf{v} (\psi)$ to be $\bot$. The claim of the lemma follows by construction.
\end{proof}

\begin{theorem}
\label{thm:completeness}
$\Axsys_{\LTC}$ is sound and complete for the logic \LTC.
\end{theorem}

\begin{proof}
The soundness of the axioms for $[\cdot]$ follows from Lemmas \ref{lemma: update with a conjunction}, 
and \ref{lemma: reduction axioms}. The soundness of Ax$_{\mathsf{\A\X\X}}$ is by Lemma \ref{lem: for reduction3}. All others axioms are well-known and proving their soundness, as well as that of the inference rules, is routine. 

\medskip

For the completeness, we need to show that every $\Axsys_{\LTC}$-consistent formula $\phi$ is satisfiable in a \LTC-model. We prove that by a series of reductions, as follows. Let $\phi$ $\Axsys_{\LTC}$-consistent. Then, by Lemma \ref{lem:reduction2}, $\mathbf{t}(\phi)$ is $\Axsys_{\A\X}$-consistent.  Next, by Lemma \ref{lem:reduction4}, $\mathbf{v} (\mathbf{t}(\phi))$ is $\Axsys_{\A\X}$-consistent, and therefore $\Axsys_{\Box}$-consistent. Since $\Axsys_{\Box}$ is complete for the logic $\textsf{KD}$, it follows that $\mathbf{v} (\mathbf{t}(\phi))$ is satisfiable, hence $\mathbf{t}(\phi)$ is satisfiable by Lemma \ref{lem:reduction4}, and therefore $\phi$ is satisfiable, too, by Theorem \ref{thm:translation}. 
\end{proof}

\begin{theorem}
\label{thm:decidability}
The satisfiability problem for \LTC is decidable, in \textsc{ExpSpace}.
\end{theorem}

\begin{proof} The translations and reductions defined here effectively reduce the satisfiability problem for \LTC to that of $\textsf{KD}$, which is \textsc{PSpace}-complete (see e.g. \cite{MARX2007139}). However, as illustrated by Example \ref{example:translation}, the translation $\mathbf{t}$ can cause an exponential blow-up to the length of the formula, hence the complexity upper bound that we provide based on the results obtained here is  \textsc{ExpSpace}. However, the number of different subformulae need not grow exponentially, so it is conceivable that a more refined, on-the-fly application of the translation $\mathbf{t}$, or an alternative decision method may bring that complexity down. We leave the question of establishing the precise complexity of \LTC open.
\end{proof}

\section{Concluding remarks} 
\label{section: concluding remarks}

In this work we have proposed the logic \LTC for formalising reasoning with and about temporal conditionals, for which we establish a series of technical results, eventually leading to a sound and complete axiomatization and a decision procedure. We show how formalization on \LTC can provide a solution to the Sea Battle Puzzle, by showing that the underlying reasoning per cases argument is not sound in that logic.

The present work suggests a variety of interesting follow-up developments:
\begin{itemize}
\item On the conceptual side, we believe that the logic \LTC or suitable variations of it can be applied to resolve other philosophical problems arising from reasoning about  temporal conditionals. We also intend to adapt that logic for reasoning about temporal counterfactuals, as well as to related type of reasoning, that often occurs in the theory of extensive forms, about the players' behaviour on and off the equilibrium path, non-credible threats, etc., leading, inter alia, to the more refined notion of subgame-perfect equilibrium and myopic equilibrium. 

\item On the technical side, we note that the Nexttime fragment of the logic CTL*, which is precisely the fragment $\Phi_{\X \A}$ of \LTC, seems not to have been studied so far. The present work provides, inter alia, a new complete axiomatization result for it. We intend to extend the present results to extensions of \LTC, also involving long-term temporal operators, by relating them with more expressive, yet still manageable fragments of CTL*. 

\item In particular, an extension of the logic LTC with past and future temporal modalities can likewise resolve the problem raised by the Diodorean Master Argument, by rendering that argument unsound with respect to the resulting semantics.  
\end{itemize}

\vspace{10pt}

\subsection*{Acknowledgment} Thanks go to Maria Aloni, Johan van Benthem, Alessandra Marra, Floris Roelofsen, Frank Veltman, and the audience of seminars or workshops at Delft University of Technology, Beijing Normal University and Tsinghua University, for their useful comments and suggestions. 
Valentin Goranko is grateful to the Department of Philosophy of Beijing Normal University for supporting his visit there, during which part of the present work was done.  
Thanks are also due to the three anonymous referees for helpful comments and some corrections. 

\bibliographystyle{aiml18}
\bibliography{Sea}

\end{document}